\documentclass[12pt]{article}
\usepackage{multirow}
\usepackage{amsmath}
\usepackage{amssymb}
\usepackage{amsthm}

\usepackage{comment}
\usepackage{algorithm}  
\usepackage{xcolor}

\usepackage{graphicx} 

\usepackage{color}
\usepackage{tikz}
\usepackage{booktabs}
\usepackage{multirow}
\usepackage{enumitem}
\usepackage{url}

\usepackage{natbib}
\usepackage{multibib}

\usepackage{hyperref}
\numberwithin{equation}{section}

\newcites{SM}{References}

\usepackage{pgfplots}
\pgfplotsset{compat=1.18}
\usetikzlibrary{arrows.meta}

		\newtheoremstyle{Format1}
		{\topsep}   
		{\topsep}   
		{\normalfont}  
		{0pt}       
		{\bfseries} 
		{\newline}  
		{5pt plus 1pt minus 1pt} 
		{}          
		      
		\newtheorem{Def}{Definition}[section]

	    \newtheorem{lemma}[Def]{Lemma}

        \newtheorem{theorem}[Def]{Theorem}
	    	
	    \newtheorem{rem}[Def]{Remark}

        \newtheorem{proposition}[Def]{Proposition}

        \newcommand{\tod}{\xrightarrow{d}}
	\newcommand{\topr}{\xrightarrow{\mathbb{P}}}
        \newcommand{\ton}{\xrightarrow{n \to \infty}}
	\newcommand{\sgn}{\operatorname{sgn}}

	\newcommand{\R}{\mathbb{R}}

\title{Testing for subgroup treatment effect consistency in the Cox model}
\author{Lukas Koletzko, Holger Dette, Björn Bornkamp, Frank Bretz, Zoe Kristin Lange}

\begin{document}

  \author{
  Lukas Koletzko$^{1}$, ~~~
  Holger Dette$^{1}$, ~~~
  Björn Bornkamp$^{2}$, \\
  Frank Bretz$^{2,3}$, ~~~
  Zoe Kristin Lange$^{1}$
  }

\maketitle

  \begin{flushleft}
    $^{1}$\texttt{Faculty of Mathematics, Ruhr-Universität Bochum, Germany}\\
     $^{2}$\texttt{Statistical Methodology, Novartis Pharma AG, Basel, Switzerland}\\
     $^{3}$\texttt{Medical University of Vienna, Center for Medical Data Science, Institute of Medical Statistics, Vienna, Austria}
  \end{flushleft}
  
\begin{abstract}
An overall treatment effect in a clinical trial may inadequately represent particular patient subgroups, creating uncertainty about whether a population-level efficacy conclusion can legitimately be transferred to them. Conventional interaction tests only investigate whether subgroup-specific treatment effects are exactly equal or not, but cannot detect whether the differences are small enough to be clinically negligible. We develop a formal framework for assessing subgroup treatment effect consistency for time-to-event outcomes within the Cox proportional hazards model. Consistency is formulated as an equivalence problem based on the (weighted) treatment-by-subgroup interaction coefficient. We consider detecting consistency between two complementary subgroups and consistency of subgroup-specific treatment effects with the overall treatment effect. For each setting, we develop a conventional two one-sided tests procedure (TOST) and a new test motivated by optimal equivalence testing for normally distributed parameters. We prove the asymptotic validity of all procedures and show empirically that the new tests are more powerful than their TOST counterparts. Finally, we apply the new methodology to a case study motivated by the CANTOS cardiovascular outcomes trial.
\end{abstract}

\noindent%
{\it Keywords:} 
equivalence testing, subgroup analysis, time-to-event data, clinical trials

\section{Introduction}
\label{sec1}

In many clinical trial settings, the treatment effect of interest is first assessed in the overall population, followed by an assessment of whether the overall finding is also representative of clinically relevant subgroups. More specifically, one may ask whether the treatment effects in different subgroups are sufficiently close to one another, or sufficiently close to the effect in the overall population. These questions are especially important in late stage drug development when treatment effects are to be interpreted across demographic or clinically defined subpopulations, and when regulatory or scientific interest extends beyond the overall analysis result
\citep{ICH_E17_2017, EMA_Subgroups_2019}. If treatment effect consistency between certain subgroups and the overall population can be justified, then the overall trial results can reasonably be applied to those subgroups. \smallskip
\\
There is a substantial literature 
on tests for assessing the null hypothesis of
treatment effect homogeneity; see, among many others, \cite{Wang2015}, \cite{Foster2016} and \cite{Shieh2017},
together with the references therein. In these works, the problem is typically formulated in terms of testing  
the null hypothesis of no interaction between treatment and the covariates defining the subgroups.  
Other work on subgroup consistency has focused primarily on design considerations rather than direct statistical inference; see, among others, \cite{liao2018sample} and \cite{wu2020regional}. These authors investigate how the overall and subgroup-specific sample sizes should be chosen to attain a desired probability, or assurance, of concluding consistency. Such approaches typically assume that consistency holds and then quantify how likely it is to be detected under that assumption.  
\smallskip
\\
We adopt a different approach in this paper. In contrast to the aforementioned references, we consider the problem of formulating and testing for treatment effect consistency in a statistically rigorous way. More specifically, we frame the problem as an equivalence-type assessment of whether any differences between subgroup effects are small enough to be clinically negligible rather than as a classical test of whether subgroup effects are exactly the same. Such a formal assessment of subgroup consistency requires several elements. First, one must specify how consistency between subgroup-specific treatment effects is to be quantified. Second, one must define a consistency margin based on clinical relevance. Third, one needs an inferential procedure with controlled Type I and Type II error rates.  
\smallskip
\\

A motivating example is the Canakinumab Anti-inflammatory Thrombosis Outcome Study (CANTOS), a randomized, double-blind, placebo-controlled phase III cardiovascular outcomes trial evaluating canakinumab in patients at high cardiovascular risk. The trial demonstrated an overall treatment benefit for the time to nonfatal myocardial infarction, nonfatal stroke, or cardiovascular death. However, women represented only about one quarter of the study population, raising the practically important question of whether the overall treatment effect could also be regarded as representative of both women and men. This illustrates the need for formal methods that assess whether subgroup-specific treatment effects differ by no more than a prespecified clinically acceptable margin, rather than relying solely on conventional interaction tests.
\medskip

The present paper addresses formal inference on subgroup treatment effect consistency for time-to-event outcomes within the Cox proportional hazards framework, focusing on the setting of two complementary subgroups that partition the study population.
Previous work has considered equivalence testing for survival functions under the proportional hazards assumption model; see, among others, \cite{furberg2021testing} and \cite{shen2023equivalence}. These contributions, however, address a different inferential target and
do not directly consider the consistency of treatment effects across clinically relevant subgroups. 
In contrast, we characterize differences in treatment effects through a treatment-by-subgroup interaction term,  say $\beta_{TS}$, which provides a natural basis for formulating and testing consistency hypotheses. We consider two practically relevant settings: The comparison of treatment effects between subgroups and the comparison of subgroup-specific treatment effects with the treatment effect in the overall population. In both settings, we formulate subgroup consistency through the consistency hypotheses $ H_0: |\beta_{TS}| \geq \theta_c$ versus $H_1: |\beta_{TS}| < \theta_c$ or weighted versions thereof,  where $\theta_c$ denotes a prespecified consistency margin defining the largest clinically acceptable deviation. \smallskip
\\
We develop two distinct test procedures for both of these settings. The first is a TOST-type procedure based on the interval inclusion principle, extending to time-to-event settings earlier results by \cite{grill2020assessing} for binary outcomes using logistic regression models and by \cite{Ring2018} for generalized linear models. The second is a novel procedure motivated by recent developments in equivalence testing \citep{dette2018equivalence,mollenhoff2020equivalence,binder2022similarity}. 
Exploiting the asymptotic normality of the partial likelihood estimator, we construct a consistency test that mimics the uniformly most powerful test for interval hypotheses concerning the mean of independent normally distributed data with known variance. 
We establish the asymptotic validity of the proposed procedure and show that, under the stated conditions, it is more powerful than the corresponding TOST-type procedure.
\smallskip
\\
We investigate the finite-sample operating characteristics of both procedures in a simulation study, with particular emphasis on Type I error control and power. The results indicate that the proposed procedure can provide substantial power gains over the TOST-type procedure, including in small-sample settings in which the nominal Type I error rate is adequately maintained. Finally, we illustrate the methodology using the motivating CANTOS cardiovascular outcomes study described above. Proofs and further technical details are provided in the Supplementary Material.

\section{Consistency of treatment effects for two subgroups}  \label{sec2} 
 We consider clinical trials conducted over a study period $[0, \tau]$, with two treatment groups (treatment and control), two complementary subgroups $\mathcal{S}_0$ and $\mathcal{S}_1$ and time to an event of interest as the primary endpoint. Let $n_\ell$ denote the number of patients in subgroup $\mathcal{S}_\ell$, $\ell = 0,1$, and let $n = n_0 + n_1$ denote the total sample size. The observed patient data are modeled by the random variables
$$\big \{ \big ( T_{{\rm obs},i},\delta_i , X_i \big )  : i=1, \ldots , n \big\}, $$
where, for patient $i=1,\ldots, n$, $T_{{\rm obs},i}= \min (T_i,C_i)$ denotes the observed follow-up time, $T_i$ the actual survival time (a positive, continuous random variable), $C_i$ the random censoring time, and $\delta_i= I \{  T_i \leq C_i \}$ (such that 1 = event and 0 = censored). We assume independent survival times $T_1, \ldots , T_n$ and i.i.d. 
censoring times $C_1, ... , C_n$, where the censoring times are independent from the survival times. We characterize patients via binary covariates modeled by i.i.d. random variables $X_1, \ldots , X_n$, where
 $X_i = (X_{iT}, X_{iS})^\top$, $X_{iT}$ decodes the treatment group membership (0 = control, 1 = treatment) and $X_{iS}$ decodes the subgroup membership (e.g., 0 = male and 1 = female) of the $i$th patient. We assume a Cox proportional hazards model with an interaction coefficient, so that the hazard function for the $i$th individual with covariates $X_i=(X_{iT},X_{iS})^\top$ can be expressed as
\begin{align}
    \label{hdn1}
h (t \mid X_i = x_i) = h_0(t) \cdot \exp \big (\beta_T x_{iT} + \beta_S x_{iS} +   \beta_{TS} x_{iT}x_{iS} \big )~,
\end{align}
where $x_i=(x_{iT},x_{iS})^\top$,  $h_0(t)$ is an unknown baseline hazard function and $\beta=(\beta_T,\beta_S,\beta_{TS})^\top $ is an unknown parameter vector  with $\beta_T$ representing the treatment effect, $\beta_S$ the subgroup effect and $\beta_{TS}$ the treatment-by-subgroup interaction effect \cite[see, for example,][]{kleinbaum2012survival}. \smallskip

 The unknown parameter $\beta$ in the Cox model \eqref{hdn1} can be estimated by maximizing the partial log-likelihood function
\begin{equation}
\label{d1}
\ell(\beta)
=
\sum_{i=1}^n
\delta_i\Big [
\beta^\top \tilde X_i
-
\log\!\Big (
\sum_{j \in R(T_{\mathrm{obs},i})}
\exp\!\big (\beta^\top \tilde X_j\big )
\Big )
\Big ],
\end{equation}
   where $ \tilde X_i = (X_{iT},X_{iS},X_{iT} \cdot X_{iS})^\top $ and
 $$R(t) = \big  \{ j  \in  \{1, \ldots , n\} ~|~ T_{{\rm obs},j} \geq t \big \} $$
denotes the risk set at time \( t \). For large sample sizes, the distribution of the partial log-likelihood estimator \( \hat{\beta} \) 
maximizing \eqref{d1} 
can be approximated  
by a $3$-dimensional normal distribution with mean $\beta$ and 
covariance matrix $\mathcal{I}^{-1}(\hat{\beta} )$ \cite[see, for example,][]{andersen1982cox}.
That is,
\begin{equation}
    \label{deta}
\hat{\beta} ~\underset{n \to \infty}{\sim}~
 \mathcal{N}_3 \big (\beta, \mathcal{I}^{-1}(\hat{\beta} ) \big )~, \quad 
\end{equation}
where 
\begin{align}
\label{fisher_est}
    \mathcal{I}(\hat{\beta} ) &= -\sum_{i=1}^n \delta_i \bigg ( 
\frac{\sum_{l \in R(T_{{\rm obs},i})} \tilde  X_l  \tilde  X_l^\top \exp(  \tilde X_l^\top \hat{\beta} )}{\sum_{l \in R(T_{{\rm obs},i})} \exp( \tilde  X_l^\top \hat{\beta} )}  \\
& ~~~~~~~~~~~~~~~ - 
\frac{\big ( \sum_{l \in R(T_{{\rm obs},i})}  \tilde X_l \exp( \tilde X_l^\top \hat{\beta} ) \big )\big ( \sum_{l \in R(T_{{\rm obs},i})}  \tilde X_l^\top \exp( \tilde X_l^\top \hat{\beta} ) \big )}{\big( \sum_{l \in R(T_{{\rm obs},i})} \exp( \tilde X_l^\top \hat{\beta} ) \big )^2}
\bigg )
\nonumber
\end{align}
denotes the plug-in estimator of the observed information matrix $\mathcal{I}(\beta)$ \citep[][, Chapter 3]{therneau2000modeling}.

\bigskip

In this section, we directly assess treatment effect consistency between the two complementary subgroups $\mathcal{S}_0$ and $\mathcal{S}_1$, without reference to the treatment effect in the overall population. The results provide a foundation for Section \ref{sec3}, in which the subgroup-specific treatment effects are compared with the overall treatment effect, while also being of independent methodological interest.

\subsection{Consistency hypotheses}

We define the treatment effect at time $t$ for a subgroup $\mathcal S_0$, specified by $X_S = 0$, and for its complement $\mathcal S_1$, specified by $X_S = 1$, in terms of the conditional $\log$-hazard ratio
\begin{align}
\label{subgroupeffects}
    &   \text{HR}_{s}(t) := \log\Big(\frac{h(t \mid X_T=1, X_S=s)}{h(t \mid X_T=0, X_S=s)} \Big), \quad s=0,1.
\end{align}
Under 
the proportional hazards model \eqref{hdn1} the effects  become time-independent and independent of the baseline hazard function $h_0$ and reduce to linear functions of only the treatment effect $\beta_T$ and the treatment-by-subgroup interaction $\beta_{TS}$:
\begin{align}
\label{subgroupeffects_2}
    &
    \text{HR}_{s}
    =
    \beta_T + s \cdot\beta_{TS}, \quad s=0,1.
\end{align}

As a consequence, we can investigate  the consistency assessment  of the  treatment effects by testing the hypotheses 
\begin{align}
\label{hypothesis1}
    H_0: |\beta_{TS}| \geq \theta_c \quad \text{against} \quad H_1: |\beta_{TS}| < \theta_c.
\end{align}
Note that $\beta_{TS}$ is the additive change in treatment effect on the log-scale between the two subgroup hazard ratios. Therefore, if the consistency margin $\theta_c > 0$ is chosen appropriately, the alternative hypothesis $H_1$ can be interpreted to mean that the two subgroup treatment effects are consistent, i.e., the treatment effect difference between the two subgroups is clinically irrelevant.

 \smallskip

\begin{rem} ~~~~
 
{\rm 
\begin{itemize}
\item[(i)]  
The definition of the hypotheses in \eqref{hypothesis1} does  not take into account the relation of the additive change in the treatment effect  $\beta_{TS} $ to the treatment effect $\beta_T$. That is, a change in the treatment effect  of size $\beta_{TS} =0.5$  may be considered as small, if the treatment effect is $\beta_T=5$ and as large if 
$\beta_T=0.2$. To address this issue one can also investigate 
 the ratio
$\frac{\beta_{TS}}{\beta_{T}}$ of the subgroup-specific
treatment effect difference and the overall treatment effect for the quantification of the
interaction  
\cite[see, for example,][]{Brookes2001,Ring2018,grill2020assessing}. The corresponding consistency hypotheses are then given by 
\begin{align*}
    H_0: \Big | \frac{\beta_{TS}}{\beta_{T}}  - 1 \Big | \geq \theta_c \quad \text{against} \quad H_1: \Big | \frac{\beta_{TS}}{\beta_{T}}  - 1 \Big |  < \theta_c.
\end{align*}
The methodology developed in this section for testing the hypotheses \eqref{hypothesis1} can be easily extended for testing these hypotheses. 
\item[(ii)]
We emphasize that hypotheses \eqref{hypothesis1} are fundamentally different from the more well-known classical problem addressed by interaction tests
\begin{align} \label{hd1}
    H_0: \beta_{TS} = 0 \quad \text{against} \quad H_1: \beta_{TS} \neq 0,
\end{align}
which is often used as an alternative 
for investigating the hypothesis of subgroup consistency. 
This classical formulation, however,  has several disadvantages. First, if $H_0$ in \eqref{hd1} is not rejected the Type II error of  deciding for 
exact consistency (that is 
$\beta_{TS} = 0$) is unknown. Second, exact equality of the two subgroup treatment effects is very unlikely in medical practice and consequently one would test a null hypothesis 
in \eqref{hd1}, which is believed not to be true anyways.  
Third, if small differences exist, the rejection of the  null hypothesis is guaranteed as soon as the sample size is sufficiently large. Therefore,  the null hypothesis is correctly rejected, although the difference might not  be of clinical relevance. 
\end{itemize}
}
\end{rem}

\subsection{Testing methodology} \label{sec2.2}

Note that the consistency hypotheses \eqref{hypothesis1} coincide with the hypotheses (10.25) in \cite{Ring2018} and the hypotheses (11) in \cite{grill2020assessing}. Applying the classical two one-sided tests (TOST) approach followed by these authors, we can investigate  these hypotheses  by testing the two one-sided pairs of hypotheses 
\begin{align*}
    H_0^{(1)}: \beta_{TS} \geq \theta_c  \quad &\text{against} \quad 
     H_1^{(1)}: \beta_{TS} < \theta_c \\
    \text{and} \quad H_0^{(2)}: \beta_{TS} \leq -\theta_c  \quad &\text{against} \quad
     H_1^{(2)}: \beta_{TS} > -\theta_c
\end{align*}
simultaneously using a combination of the intersection union principle (see \cite{berger1982}) and 
two one-sided tests, which are obtained from the duality between confidence intervals and tests (see \cite{Aitchison}). 
More specifically, 
the  test based on the  TOST rejects the null hypothesis in  \eqref{hypothesis1} whenever 
\begin{align}
\label{TOST1}
\big[ \hat \beta_{TS} - z_{1- \alpha } \, \hat \sigma_{\hat \beta_{TS}}, \ 
\hat \beta_{TS} + z_{1- \alpha} \, \hat \sigma_{\hat \beta_{TS}} \big]     
    \subseteq (-\theta_c, \theta_c), 
\end{align}
where $\hat \sigma_{\hat \beta_{TS}}^2 := {\big[\mathcal I(\hat \beta)^{-1}\big]_{33}}$ denotes an estimator of the asymptotic variance of 
the partial likelihood estimator $\hat \beta_{TS}$, $\mathcal{I}(\hat \beta)$ the plug-in estimator of the observed information matrix defined in \eqref{fisher_est} and $z_{1-\alpha}$ the $(1-\alpha)$-quantile of the standard normal distribution. 
The $p$-value of the test \eqref{TOST1} can be approximated by 
    \begin{align} \label{ptost}
        p_{TOST} = \max 
        \Big \{ \Phi \Big(\frac{\hat \beta_{TS} - \theta_c}{\hat \sigma_{\hat \beta_{TS}}} \Big),  \Phi \Big(\frac{-\hat \beta_{TS} - \theta_c}{\hat \sigma_{\hat \beta_{TS}}} \Big) \Big\} ,
    \end{align}
    where $\Phi $ denotes the cumulative distribution function (CDF) of the standard normal distribution.
We prove in Theorem \ref{thm1} in the Supplementary Material that the  decision rule \eqref{TOST1}  yields an asymptotically valid test. 
Procedures based on the TOST principle, including the test defined in \eqref{TOST1}, can nevertheless be conservative due the application of the intersection union principle. Despite this potential conservatism, TOST-based procedures are widely used in practice because of their simplicity. 

\smallskip

In the following, we present 
a novel and more powerful
method for testing the hypotheses \eqref{hypothesis1} which  mimics  the uniformly most powerful test for the hypotheses $H_0: | \mu | \geq \epsilon $ versus $H_1: | \mu | < \epsilon$
for univariate normally distributed data with mean $\mu$ and  known variance \citep[see][]{romano}.
As  the estimator $\hat \beta_{TS}$ is asymptotically normal distributed and the asymptotic variance can be estimated consistently, 
this test  provides a more powerful approach for testing the hypotheses  \eqref{hypothesis1} than the test \eqref{TOST1} based on the TOST principle. We give a rigorous argument for this superiority in Remark \ref{rem1} below and also demonstrate  the finite-sample gains of the new  method  by means of a simulation study in Section \ref{sec4}.

\smallskip

\begin{algorithm}[H]
\caption{Powerful  test for hypotheses \eqref{hypothesis1}}
\label{alg1}

    \begin{itemize}
    \item[(1)] Calculate the estimator  $\hat \beta$  by maximizing the partial log-likelihood function in \eqref{d1}. 
    \item[(2)] Calculate the cond. $\alpha$-quantile $q_\alpha^{*(1)}$ 
     of the folded normal distribution $\big |\mathcal{N} \big( \theta_c , \ \hat \sigma_{\hat \beta_{TS}}^2 \big)\big |$.
    \item[(3)] Reject the null hypothesis in \eqref{hypothesis1} whenever
     \begin{align}
        \label{bootstrap1}
       |\hat \beta_{TS}|  < q_\alpha^{*(1)}. 
     \end{align}
\end{itemize}
\end{algorithm}

The quantile $q_\alpha^{*(1)}$ in \eqref{bootstrap1} can be determined as the unique solution $y^* \geq 0$ of the equation
$$
\Phi\!\Big (\frac{y^*-\theta_c }{\hat \sigma_{\hat \beta_{TS}}}\Big)
-
\Phi\!\Big (\frac{-y^*-\theta_c }{\hat \sigma_{\hat \beta_{TS}}}\Big )
= \alpha
$$
and the corresponding $p$-value can be approximated by
\begin{align} \label{pboot}
p_{alg} =
\Phi\!\Big(\frac{|\hat \beta_{TS}| -\theta_c }{\hat \sigma_{\hat \beta_{TS}}}\Big )
-
\Phi\!\Big  (\frac{-|\hat \beta_{TS}| -\theta_c }{\hat \sigma_{\hat \beta_{TS}}}\Big ).
\end{align}
The validity of the  test \eqref{bootstrap1} is proved in Theorem \ref{thm2} in the Supplementary Material.
\smallskip

  \begin{rem} ~~~
  \label{rem1}
      {\rm 
(a) Comparing the approximations of the $p$-values in \eqref{ptost} and \eqref{pboot}
     we observe that $p_{alg} <  p_{TOST}$, which indicates that for large sample sizes the  test \eqref{bootstrap1} has more power than the test based on the TOST principle.
This is also confirmed  by means of a simulation study  in Section \ref{sec4}, which shows that for realistic sample sizes  the new  test has   always larger power than the TOST-approach. This  improvement is particularly notable for small sample sizes and an at least moderate   censoring rate, where the TOST completely breaks down. 

(b) An important element for testing the hypotheses \eqref{hypothesis1} is the specification of the consistency margin $\theta_c$. This threshold  depends sensitively on the particular application and, ideally, has to be specified  in close collaboration with the clinical team. If such a specification  
is not possible, we can also fix the consistency threshold in a data-driven way. More specifically, note that the hypotheses \eqref{hypothesis1}
are nested and the decision rules in \eqref{TOST1} and \eqref{bootstrap1} are  monotone with respect to $\theta_c$. Therefore, by the sequential rejection principle \citep[see][]{Sonnemann2008}, we can simultaneously test the hypotheses \eqref{hypothesis1} and find the smallest threshold $\hat \theta_c(\alpha)$ for which the null hypothesis is rejected at significance level $\alpha$. This value can also be interpreted as a measure of evidence for similarity with a controlled Type I level $\alpha$. \smallskip
 }
  \end{rem}

\section{Consistency of subgroup-specific treatment effects with the overall population treatment effect}
\label{sec3}

In this section, we assess the consistency of subgroup-specific treatment effects with the overall population treatment effect. We consider both the comparison of a single subgroup-specific effect with the overall effect and the simultaneous comparison of both subgroup-specific effects with the overall effect.

\subsection{Consistency hypotheses}

Given the subgroup treatment effects $\text{HR}_0$ and $\text{HR}_1$ in \eqref{subgroupeffects_2}, we define the population treatment effect $\text{HR}_{pop}$ as the \textit{expected} subgroup log-hazard ratio, that is,
\begin{align}
\label{popeffect}
    \text{HR}_{pop} &  := \mathbb E_{X_S}\big[ \text{HR}_{X_S}  \big] = (1-p) \ \text{HR}_{0}  + p \ \text{HR}_{1}  = 
     \beta_T + p \beta_{TS},
\end{align}
where $X_S$ is a $\text{Bernoulli}(p)$-distributed random variable defining if a patient belongs to subgroup ${\cal S}_0$  ($X_S=0)$ or ${\cal S}_1$ ($X_S=1)$ and $p \in (0,1)$ represents the unknown proportion of subgroup $\mathcal{S}_1$ in the population. We emphasize that $p$ does not denote the trial-specific subgroup prevalence (which is $n_1/n$ and so known), but instead the population-specific subgroup prevalence (which may be unknown). 

\begin{rem} {\rm 
The population treatment effect $\text{HR}_{pop}$ in \eqref{popeffect} is a weighted average of the subgroup treatment effects and we point out that similar concepts have been used in the literature. For example, \cite{grill2020assessing} and \cite{Ring2018} define the population treatment effect $\Delta = \pi_1 \delta_1 + \pi_2 \delta_2$, where $\pi_1, \pi_2$ denote the proportions of two complementary subgroups in the population and $\delta_1$, $\delta_2$ denote the subgroup treatment effects.
Similarly, \cite{dette2025testing} define an overall dose-response relationship for the population by $\bar \mu = \sum_{\ell=1}^k p_\ell \mu_\ell$, where $p_1, ... , p_k$ denote the known proportions of $k$ disjoint subgroups partitioning the population and $\mu_1, ... , \mu_k$ represent the  dose-response curves in the subgroups.} 
\end{rem}

\smallskip

 The absolute difference between the subgroup treatment effects and the population treatment effect equals 
\begin{align}
    \label{hd10}
\big  | \text{HR}_{pop} -  \text{HR}_s \big |  =   \left\{ 
    \begin{array}{ccc}
         &    p \cdot |\beta_{TS}| & \text{ if } s=0 \\
         &    (1-p) \cdot |\beta_{TS}| & \text{ if } s=1
    \end{array}\right. 
\end{align}
such that we can investigate the consistency assessment between the subgroup ${\cal S}_0$ and the population by testing the hypotheses
\begin{align}
    \label{weighted_hyps1}
    H_0: p \cdot |\beta_{TS}| \geq \theta_c \quad \text{against} \quad H_1:  p \cdot | \beta_{TS}| < \theta_c,
\end{align}
and the consistency assessment between the subgroup ${\cal S}_1$ and the population by testing the hypotheses
\begin{align}
    \label{weighted_hyps2}
    H_0: (1-p) \cdot |\beta_{TS}| \geq \theta_c \quad \text{against} \quad H_1: (1-p)\cdot | \beta_{TS}| < \theta_c.
\end{align}

\smallskip

To investigate the consistency between both subgroups $\mathcal{S}_0$ and ${\cal S}_1$ and the population we propose testing the hypotheses
\begin{align}
    \label{weighted_hyp}
    H_0: \max \{p,1-p\} \cdot |\beta_{TS}| \geq \theta_c \quad \text{against} \quad H_1:  \max \{p,1-p\} \cdot | \beta_{TS}| < \theta_c . 
\end{align}
If the consistency margin $\theta_c > 0$ and the nominal  are  chosen appropriately, rejecting the null hypothesis in \eqref{weighted_hyps1} or \eqref{weighted_hyps2} means that the treatment effect difference between the respective subgroup identified by $s$  is consistent with the overall population treatment effect.  In the same way, rejecting the null hypothesis in \eqref{weighted_hyp}  means that  both subgroup treatment effects are consistent with the population treatment effect. 

\smallskip

In order to develop testing methodology for these hypotheses,  we  assume that the unknown subgroup prevalence $p$ can be estimated by an estimator $\hat p$ satisfying the following assumption.
\begin{itemize}
    \item[(P1)]  For large sample sizes the joint distribution of $(\hat \beta_{TS}, \hat p)^\top $ can be approximated by a normal distribution, that is 
\begin{align}
  (\hat \beta_{TS}, \hat p)^\top  \underset{n \to \infty}{\sim}~
 \mathcal{N}_2 ( ( \beta_{TS},  p)^\top  , \hat \Sigma ),
\label{hd11}
\end{align}
    where $ \hat \Sigma $ is an appropriate estimator of the asymptotic covariance.
\end{itemize}
In Section \ref{sec34} we discuss estimators $\hat p$ satisfying  Assumption (P1). In this case we can develop  TOST and a more powerful methodology, as considered in the previous section, for assessing the consistency of subgroups with the population.

\subsection{Testing methodology for the hypotheses \texorpdfstring{\eqref{weighted_hyps1}}{(weightedhypotheses1)} and \texorpdfstring{\eqref{weighted_hyps2}}{(weightedhypotheses2)} \label{sec32}}

Note that the hypotheses \eqref{weighted_hyps1} and \eqref{weighted_hyps2} can be written in the form 
\begin{align}
    \label{weighted_hyp1}
    H_0: w(p)  \cdot |\beta_{TS}| \geq \theta_c \quad \text{against} \quad H_1:  w(p)  \cdot | \beta_{TS}| < \theta_c 
\end{align}
 using the weight functions $w(p) = p$ and $w(p) = 1-p$, respectively. Therefore, we can treat both cases simultaneously. Moreover, the methodology developed in this section is applicable to more general weight functions $w: \R \to (0,1) $ as well. The corresponding test based on the TOST principle rejects the null hypothesis whenever 
\begin{align}
\label{TOST2}
\hat I_n^{w(p)} := \big[ w(\hat p)\,\hat \beta_{TS} - z_{1- \alpha } \, \hat \sigma_{w(p)} , \  w(\hat p)\,\hat \beta_{TS} + z_{1- \alpha} \, \hat \sigma_{w(p)}   \big ]   
    \subseteq (-\theta_c, \theta_c),
\end{align}
where \begin{align}
\label{varestim}
\hat \sigma_{w(p)}^2 := (w(\hat p), w^\prime (\hat p) \hat \beta_{TS}) \ \hat \Sigma \ (w(\hat p), w^\prime(\hat p) \hat \beta_{TS})^\top
\end{align} 
is an estimator of the variance of the estimator $w(\hat p) \hat \beta_{TS}$, the matrix $\hat \Sigma$ is defined in \eqref{hd11} and the choice $w(p)=p$ corresponds to \eqref{weighted_hyps1} and $w(p)=1-p$ to \eqref{weighted_hyps2} (see Theorem \ref{thm5} in the Supplementary Material for the precise statement). Alternatively, the $p$
-value of the test \eqref{TOST2} can be approximated by 
    \begin{align} \label{ptost2}
        p^{(w)}_{TOST} = \max 
        \Big \{ \Phi \Big(\frac{w(\hat p) \hat \beta_{TS} - \theta_c}{\hat \sigma_{w(p)}} \Big),  \Phi \Big(\frac{-w(\hat p)\hat \beta_{TS} - \theta_c}{\hat \sigma_{w(p)}} \Big) \Big\}.
    \end{align} \smallskip

Similar to the discussion in Section \ref{sec2.2} we can construct a more powerful test than the test obtained by the TOST principle for the hypotheses in \eqref{weighted_hyps1} and \eqref{weighted_hyps2}. 
The details are given in Algorithm \ref{alghol}. 

\begin{algorithm}[H]
\caption{Powerful test for hypotheses \eqref{weighted_hyps1} and \eqref{weighted_hyps2} \label{alghol}}
\label{alg2}
    \begin{itemize}
    \item[(1)] Calculate the estimator $\hat \beta$ maximizing the partial log-likelihood function in \eqref{d1} and the estimator $\hat p$.  
    \item[(2)] For \eqref{weighted_hyps1} define $w(p)=p$, for \eqref{weighted_hyps2} define $w(p)=1-p$ and calculate the conditional $\alpha$-quantile $q_\alpha^{*(2)}$ of $|\mathcal{N}(\theta_c, \hat \sigma_{w(p)}^2)|$.
   \item[(3)] Reject the corresponding null hypothesis whenever 
     \begin{align}
         \label{bootstrap2}
       w(\hat p) \, |\hat \beta_{TS}|  < q_\alpha^{*(2)}.
     \end{align}
\end{itemize}
\end{algorithm}
The asymptotic validity of the test \eqref{bootstrap2} is proved in Theorem \ref{thm6} in the Supplementary Material. Similar to Algorithm 1 the corresponding $p$-value is approximated by 
\begin{align}
    \label{pbootw}
    p^{(w)}_{alg} = \Phi\!\Big(\frac{w(\hat p)|\hat \beta_{TS}| -\theta_c }{\hat \sigma_{w(p)} }\Big )
-
\Phi\!\Big  (\frac{-w(\hat p)|\hat \beta_{TS}| -\theta_c }{\hat \sigma_{w(p)}}\Big )
\end{align}
and we note that $p^{(w)}_{alg} < p^{(w)}_{TOST}$.

\subsection{Testing methodology for the hypotheses \texorpdfstring{\eqref{weighted_hyp}}{(weighted hypotheses)}} \label{sec33}

Note that hypotheses \eqref{weighted_hyp} coincide with hypotheses (10) in  \cite{grill2020assessing}, who used the TOST approach for the construction of a test. More specifically, they proposed to reject the null hypothesis whenever
\begin{align}
\label{FOST} 
  \text{and ~~~~~~~~~}
  \begin{split} 
  \hat I_n^{p} &:=  \big[ \hat p\,\hat \beta_{TS} - z_{1- \alpha } \, \hat \sigma_{p} ~, ~ \hat p\,\hat \beta_{TS} + z_{1- \alpha} \,\hat \sigma_{p} \big ]  \subseteq (-\theta_c, \theta_c)  \\
    \hat I_n^{1-p} &:=   \big[ (1-\hat p) \,\hat \beta_{TS} - z_{1- \alpha } \, \hat \sigma_{1-p} ~, ~ (1-\hat p) \,\hat \beta_{TS} + z_{1- \alpha} \, \hat \sigma_{1-p} \big ]  \subseteq (-\theta_c, \theta_c) 
  \end{split}
\end{align}
(a precise statement of the validity of the test can be found in Theorem \ref{thm3} in the Supplementary Material).
The $p$-value of the test \eqref{FOST} can be approximated by  
    \begin{align*}
        p_{TOST}^{(max)} = \max \Big \{ 
        \Phi \Big  (\frac{\hat p\hat\beta_{TS} - \theta_c}{\hat\sigma_{p} } \Big ),
        \Phi \Big  (\frac{-\hat p\hat\beta_{TS} - \theta_c}{\hat\sigma_{p} }   \Big ),
        \Phi \Big  (\frac{(1-\hat p) \hat\beta_{TS} - \theta_c}{\hat\sigma_{1-p} }   \Big ),
        \Phi \Big  ( \frac{-(1-\hat p)\hat\beta_{TS} - \theta_c}{\hat\sigma_{1-p} }
        \Big )  \Big \} .
    \end{align*}

In Algorithm \ref{alg3} we propose a  more powerful   test which  provides an alternative to the test \eqref{FOST} for the hypotheses \eqref{weighted_hyp} and we  illustrate its superiority in Section \ref{sec4} by means of a simulation study (see Table \ref{tab2}).

\begin{algorithm}[H]
\caption{Powerful test for hypotheses \eqref{weighted_hyp}}
\label{alg3}
    \begin{itemize}
    \item[(1)] Calculate the  estimator $\hat \beta$ maximizing the partial log-likelihood function in \eqref{d1}
    and the estimator $\hat p$.
    \item[(2)] Calculate the cond.  $\alpha$-quantile $q_\alpha^{*(3)}$ of the distribution of $\max \{ \hat p^* , 1 -\hat p^* \} \, |\hat \beta_{TS}^*|$ with 
      \begin{align} \label{norm3}
          (\hat \beta_{TS}^*, \hat p^*)^\top \sim \mathcal{N}_2\big( \ \big( \theta_c / \max \{ \hat p , 1 -\hat p \} ), \hat p \big)^\top, \ \hat \Sigma\ \big).
      \end{align}
     \item[(3)] 
     Reject the null hypothesis in \eqref{weighted_hyp} whenever 
     \begin{align}
         \label{bootstrap3}
      \max \{\hat p,1-\hat p\}  \, |\hat \beta_{TS}|  < q_\alpha^{*(3)}.
     \end{align}
\end{itemize}
\end{algorithm}

The validity of the  test \eqref{bootstrap3} is proved in Theorem \ref{thm4} in the Supplementary Material. In contrast to Algorithm \ref{alg1}-\ref{alg2}, the quantile $q_\alpha^{*(3)}$ cannot be simulated directly from the limit distribution of $\max \{\hat p,1-\hat p\}  \, |\hat \beta_{TS}|$, as this distribution differs for $p \neq 1/2$ and $p=1/2$. This has the further effect that the test \eqref{bootstrap3} may be conservative for $p=1/2$ (see Theorem \ref{thm4} for the details) and that in general there exists no closed-form expression for the $p$-value of the test. In practice, the theoretical $\alpha$-quantile $q_\alpha^{*(3)}$ can be simulated by the empirical $\alpha$-quantile of $B$ realizations $\max \{ \hat p_1^* , 1 -\hat p_1^* \} \,  |\hat \beta_{TS,1}^*|, ... , \max \{ \hat p_B^* , 1 -\hat p_B^* \} \, |\hat \beta_{TS,B}^*|$ of the bootstrap statistic $\max \{ \hat p^* , 1 -\hat p^* \} |\hat \beta_{TS}^*|$ to arbitrary precision. Similarly, the $p$-value is given by
    \begin{align}
    \label{pvalue_boot2}
    p^{(B)}_{boot} = \frac{1}{B} \sum_{b=1}^B I \big( \max \{ \hat p_b^* , 1 -\hat p_b^* \} \, | \hat \beta_{TS,b}^* | \leq \max \{ \hat p , 1 -\hat p \} \, |\hat \beta_{TS}| \big).
    \end{align}

\begin{rem} ~~~
{\rm 
        Subgroup-versus-subgroup consistency, as defined by the hypotheses in \eqref{hypothesis1}, is a stronger requirement than consistency of one or both subgroup-specific treatment effects with the overall population treatment effect, as defined by the hypotheses in \eqref{weighted_hyps1}--\eqref{weighted_hyp}. Specifically, if the two subgroup-specific treatment effects are consistent at a given margin $\theta_c$ (i.e., $|\beta_{TS}| < \theta_c$), then they will also be consistent with the overall population effect at that same margin (i.e., $\max \{ p, 1-p \} \, |\beta_{TS}| < \theta_c$). The converse does not generally hold: Even if one or both subgroup-specific treatment effects are consistent with the overall population at margin $\theta_c$, the difference between the two subgroup-specific effects may exceed that margin.
       }
\end{rem}

\subsection{Estimating the subgroup prevalence  \label{sec34}}

We briefly discuss two estimators  for the subgroup prevalence for which we prove in Section \ref{sec83} that   assumption (P1) is satisfied. In the same section, we also provide details for the corresponding variance estimator \eqref{varestim}.
Other estimators such as  the estimator  based on inverse probability-of-selection weights  proposed by \cite{cole2010generalizing} could be used as well, but we omit a more comprehensive discussion for the sake of brevity. \smallskip

 Arguably, the most natural estimator for the subgroup prevalence $p$ is the proportion of patients
$
    \hat p_{mle} = \frac{1}{n} \sum_{i=1}^n X_{iS}
$
in the trial which belong to the subgroup $\mathcal{S}_1$ and we show in Proposition \ref{prop0} in Section \ref{sec83} that $\hat p_{mle}$ 
satisfies (P1). We will use this estimator in the simulation study and the data application.
\smallskip
 
If the trial sample cannot be viewed as a representative sample from the population, an estimate is still possible if an additional covariate can be  recorded for patients in the trial sample and for patients in the non-trial sample. More specifically, following the discussion in Section 5.1 in \cite{dahabreh2020extending}, we denote the  additional covariate vector recorded for patients in the trial sample  by $Z \in \mathbb R^p$  and for patients in the non-trial sample by $Z^* (\in \mathbb R^p)$. By the law of iterated expectation, we can express $p$ as
\begin{align} \label{itexp}
    p = \mathbb E_{Z} ( \mathbb P( X_S = 1 | Z) ).
\end{align}
If $\hat p(Z)$ is an estimate of the conditional probability  $p(Z) = \mathbb P( X_S = 1 | Z) $ we can estimate the subgroup prevalence by  
\begin{align}
    \label{outcome}
   \hat p_{OM} = \frac{1}{n^*} \sum_{i=1}^{n^*} \hat p (Z_i^*)
\end{align}
(here $n^*$ be the non-trial sample size). The consistency of this estimator is not guaranteed in general and studied in \cite{dahabreh2020extending} under specific assumptions.

\section{Finite sample properties \label{sec4}}

In this section we conduct a simulation study with two objectives: first, to compare the Type I error rate and power of the TOST \eqref{TOST1} and new test \eqref{bootstrap1} for the subgroup-versus-subgroup hypotheses \eqref{hypothesis1} and second, to compare the Type I error rate and power of the TOST-based test \eqref{FOST} and new test \eqref{bootstrap3} for the subgroups-versus-population hypotheses \eqref{weighted_hyp}. We omit a comparison of the (weighted) TOST \eqref{TOST2} and new test \eqref{bootstrap2} for the sake of brevity and only note that the conclusions are very similar. \smallskip

\textbf{Setup.} We consider a generic Phase III trial for time-to-event data lasting 2 years with a recruitment window of 1 year. Survival times (measured in days) are generated from the Cox model \eqref{hdn1} with Weibull baseline hazard given by $h_0(t) = \lambda \kappa t^{\kappa - 1} $ with scale $\lambda = 0.002$, shape $\kappa = 1.2$ and parameter vector $\beta = ( -0.2 , -0.2 , \beta_{TS})^\top$. Right-censoring times due to study-end or loss-to-follow-up are generated from $C = \min(C_1, C_2)$, where $C_1 \sim 730 - U[0,365]$ (study end) and $C_2 \sim \text{Exp}(\gamma)$, $\gamma > 0$ (loss-to-follow-up). We simulate the rejection probability of the tests for the sample sizes $n = 200, 800, 5000$ corresponding to three broad categories of Phase III trials: small (e.g. rare diseases), medium (e.g. oncology) and large (e.g. cardiovascular). Patients are split into two subgroups with subgroup sample sizes denoted by $n_0$ and $n_1$. For the subgroup-versus-subgroup hypotheses \eqref{hypothesis1}, we assume the subgroup sample sizes are pre-specified and for each total sample size $n$, we consider two cases: a balanced case $(n_0 = n_1)$ and an unbalanced case $(n_0 \neq n_1)$, where $n_1 = n/4, \, n_0 = 3n/4$. For the subgroups-versus-population hypotheses \eqref{weighted_hyp}, we assume the subgroup sample sizes are random variables $n_1 \sim \text{Binomial}(n,p)$ and $n_0 = n - n_1$, where the subgroup prevalence $p$ is unknown and estimated by the trial subgroup proportion $\hat p = n_1/n$. Again, we investigate the balanced case ($p=0.5$) and an unbalanced case ($p=0.25$). For each sample size scenario (small, medium and large), we explore the performance of the tests for three different levels of censoring: low (20 \%), medium (50 \%) and high (80 \%). We fix the consistency threshold $\theta_c = \log(2)$ and vary the magnitude of the interaction coefficient $\beta_{TS}$ around $\log(2)$ in order to simulate the rejection probability of the tests in the interior of $H_0$, on the boundary of $H_0$ (``null boundary") and under the alternative $H_1$. Note that the rejection rates of the tests are mainly driven by the \textit{difference} $\theta_c - \beta_{TS}$, not by the individual values of $\theta_c$ and $\beta_{TS}$, which is why we focus on a fixed threshold $\theta_c$ throughout the simulation and refrain from considering multiple thresholds. For each choice of $\beta_{TS}$ we adapt the dropout rate $\gamma$ to maintain a certain censoring proportion ($20 \%$, $50 \%$ or $80 \%$) yielding 3 sub-tables labeled (Low), (Medium) and (High). The empirical $\alpha$-quantiles for Algorithm \ref{alg1} and Algorithm \ref{alg3} are calculated based on $B=5000$ bootstrap samples and all rejection probabilities are approximated by their empirical averages based on $5000$ test runs. \smallskip

\begin{table}[!htbp]
\footnotesize
\centering
\begin{tabular}{cc|c|c||c|c||c|c|}
\cline{3-8}
 & & \multicolumn{2}{c||}{$n=200$} & \multicolumn{2}{c||}{$n=800$} & \multicolumn{2}{c|}{$n=5000$} \\
\cline{2-8}
\multicolumn{1}{c|}{} &
\multicolumn{1}{c|}{$\beta_{TS}$} &
$n_0 = n_1$ & $n_0 \neq n_1$ &
$n_0 = n_1$ & $n_0 \neq n_1$ &
$n_0 = n_1$ & $n_0 \neq n_1$ \\
\cline{2-8}

\multicolumn{1}{c|}{\multirow{12}{*}{(Low)}} 
 & \multicolumn{1}{|c|}{\multirow{2}{*}{$\log(2.5)$}}  & \rule{0pt}{3ex} 0.010 & 0.012 & 0.001 & 0.002 & 0.000 & 0.000 \\
 & \multicolumn{1}{|c|}{} & (0.009) & (0.007) & (0.000) & (0.001) & (0.000) & (0.000) \\[4pt]
 & \multicolumn{1}{|c|}{\multirow{2}{*}{$\log(2.25)$}}  & 0.023 & 0.029 & 0.008 & 0.012 & 0.000 & 0.000 \\
 & \multicolumn{1}{|c|}{}  & (0.019) & (0.017) & ( 0.007) & (0.008) & (0.000) & (0.000) \\[4pt]
 & \multicolumn{1}{|c|}{\multirow{2}{*}{
\textbf{log(}$\mathbf{2}$\textbf{)}}} & \textbf{0.050} & \textbf{0.051} & \textbf{0.047} & \textbf{0.047} & \textbf{0.052} & \textbf{0.050} \\
 & \multicolumn{1}{|c|}{}  & \textbf{(0.046)} & \textbf{(0.032)} & \textbf{(0.053)} & \textbf{(0.053)} & \textbf{(0.053)} & \textbf{(0.051)} \\[4pt]
 & \multicolumn{1}{|c|}{\multirow{2}{*}{$\log(1.75)$}} & 0.104 & 0.083 & 0.214 & 0.184 & 0.673 & 0.559 \\
 & \multicolumn{1}{|c|}{}  & (0.101) & (0.056) & (0.207) & (0.179) & (0.669) & (0.558) \\[4pt]
 & \multicolumn{1}{|c|}{\multirow{2}{*}{$\log(1.5)$}} & 0.199 & 0.144 & 0.572 & 0.458 & 0.998 & 0.987 \\
 & \multicolumn{1}{|c|}{}  & (0.184) & (0.080) & (0.562) & (0.455) & (0.998) & (0.987) \\[4pt]
 & \multicolumn{1}{|c|}{\multirow{2}{*}{$\log(1.25)$}} & 0.334 & 0.224 & 0.905 & 0.819 & 1.000 & 1.000 \\
 & \multicolumn{1}{|c|}{}  & (0.316) & (0.159) & (0.902) & (0.818) & (1.000) & (1.000) \\[4pt]
\cline{2-8}
\cline{2-8}

\multicolumn{1}{c|}{\multirow{12}{*}{(Medium)}} 
 & \multicolumn{1}{|c|}{\multirow{2}{*}{$\log(2.5)$}} & \rule{0pt}{3ex} 0.018 & 0.020 & 0.003 & 0.004 & 0.000 & 0.000 \\
 & \multicolumn{1}{|c|}{}  & (0.004) & (0.000) & (0.002) & (0.004) & (0.000) & (0.000) \\[4pt]
 & \multicolumn{1}{|c|}{\multirow{2}{*}{$\log(2.25)$}} & 0.027 & 0.036 & 0.012 & 0.015 & 0.001 & 0.001 \\
 & \multicolumn{1}{|c|}{}  & (0.005) & (0.000) & (0.011) & (0.014) & (0.000) & (0.001) \\[4pt]
 & \multicolumn{1}{|c|}{\multirow{2}{*}{
\textbf{log(}$\mathbf{2}$\textbf{)}}} & \textbf{0.050} & \textbf{0.052} & \textbf{0.049} & \textbf{0.050} & \textbf{0.049} & \textbf{0.051} \\
 & \multicolumn{1}{|c|}{}  & \textbf{(0.008)} & \textbf{(0.000)} & \textbf{(0.050)} & \textbf{(0.047)} & \textbf{(0.048)} & \textbf{(0.046)} \\[4pt]
 & \multicolumn{1}{|c|}{\multirow{2}{*}{$\log(1.75)$}} & 0.076 & 0.077 & 0.157 & 0.136 & 0.506 & 0.427 \\
 & \multicolumn{1}{|c|}{}  & (0.017) & (0.000) & (0.155) & (0.135) & (0.501) & (0.424) \\[4pt]
 & \multicolumn{1}{|c|}{\multirow{2}{*}{$\log(1.5)$}} & 0.123 & 0.097 & 0.423 & 0.341 & 0.974 & 0.922 \\
 & \multicolumn{1}{|c|}{}  & (0.031) & (0.000) & (0.412) & (0.337) & (0.971) & (0.920) \\[4pt]
 & \multicolumn{1}{|c|}{\multirow{2}{*}{$\log(1.25)$}} & 0.176 & 0.120 & 0.751 & 0.612 & 1.000 & 0.999 \\
 & \multicolumn{1}{|c|}{}  & (0.042) & (0.000) & (0.747) & (0.606) & (1.000) & (0.999) \\[4pt]
\cline{2-8}
\cline{2-8}

\multicolumn{1}{c|}{\multirow{12}{*}{(High)}} 
 & \multicolumn{1}{|c|}{\multirow{2}{*}{$\log(2.5)$}} & \rule{0pt}{3ex} 0.035 & 0.046 & 0.010 & 0.012 & 0.000 & 0.000 \\
 & \multicolumn{1}{|c|}{}  & (0.000) & (0.000) & (0.009) & (0.009) & (0.000) & (0.000) \\[4pt]
 & \multicolumn{1}{|c|}{\multirow{2}{*}{$\log(2.25)$}} & 0.042 & 0.052 & 0.021 & 0.026 & 0.005 & 0.007 \\
 & \multicolumn{1}{|c|}{}  & (0.000) & (0.000) & (0.019) & (0.017) & (0.006) & (0.006) \\[4pt]
 & \multicolumn{1}{|c|}{\multirow{2}{*}{
\textbf{log(}$\mathbf{2}$\textbf{)}}} & \textbf{0.045} & \textbf{0.062} & \textbf{0.047} & \textbf{0.052} & \textbf{0.048} & \textbf{0.048} \\
 & \multicolumn{1}{|c|}{}  & \textbf{(0.000)} & \textbf{(0.000)} & \textbf{(0.043)} & \textbf{(0.030)} & \textbf{(0.050)} & \textbf{(0.052)} \\[4pt]
 & \multicolumn{1}{|c|}{\multirow{2}{*}{$\log(1.75)$}} & 0.054 & 0.069 & 0.100 & 0.088 & 0.280 & 0.230 \\
 & \multicolumn{1}{|c|}{}  & (0.000) & (0.000) & (0.095) & (0.048) & (0.277) & (0.226) \\[4pt]
 & \multicolumn{1}{|c|}{\multirow{2}{*}{$\log(1.5)$}} & 0.072 & 0.086 & 0.209 & 0.139 & 0.728 & 0.598 \\
 & \multicolumn{1}{|c|}{}  & (0.000) & (0.000) & (0.185) & (0.065) & (0.726) & (0.594) \\[4pt]
 & \multicolumn{1}{|c|}{\multirow{2}{*}{$\log(1.25)$}} & 0.076 & 0.092 & 0.327 & 0.179 & 0.981 & 0.911 \\
 & \multicolumn{1}{|c|}{}  & (0.000) & (0.000) & (0.313) & (0.071) & (0.977) & (0.909) \\[4pt]
\cline{2-8}
\end{tabular}
\caption{\it Simulated rejection probabilities of the  test \eqref{bootstrap1}  (blank) and TOST \eqref{TOST1} (brackets) for the subgroup-versus-subgroup hypotheses \eqref{hypothesis1} with significance level $\alpha = 0.05$ and consistency threshold $\theta_c = \log(2)$. The bold-font rows mark the null boundary. For each interaction coefficient $\beta_{TS}$ the dropout rate $\gamma$ is adjusted to keep a censoring proportion of 20 \% in table (Low), 50 \% in table (Medium) and 80 \% in table (High).}
\label{tab1}
\end{table}

\begin{table}[!htbp]
\footnotesize
\centering
\begin{tabular}{cc|c|c||c|c||c|c|}
\cline{3-8}
 & & \multicolumn{2}{c||}{$n=200$} & \multicolumn{2}{c||}{$n=800$} & \multicolumn{2}{c|}{$n=5000$} \\
\cline{2-8}
\multicolumn{1}{c|}{} &
\multicolumn{1}{c|}{$\beta_{TS}$} &
$p = 0.5$ & $p = 0.25$ &
$p = 0.5$ & $p = 0.25$ &
$p = 0.5$ & $p = 0.25$ \\
\cline{2-8}

\multicolumn{1}{c|}{\multirow{12}{*}{(Low)}} 
 & \multicolumn{1}{|c|}{\multirow{2}{*}{$\log(2.5)$}}  & \rule{0pt}{3ex} 0.007 & 0.017 & 0.001 & 0.001 & 0.000 & 0.000 \\
 & \multicolumn{1}{|c|}{} & (0.007) & (0.010) & (0.000) & (0.001) & (0.000) & (0.000) \\[4pt]
 & \multicolumn{1}{|c|}{\multirow{2}{*}{$\log(2.25)$}}  & 0.023 & 0.028 & 0.006 & 0.013 & 0.000 & 0.001 \\
 & \multicolumn{1}{|c|}{}  & (0.016) & (0.020) & (0.005) & (0.012) & (0.000) & (0.000) \\[4pt]
 & \multicolumn{1}{|c|}{\multirow{2}{*}{
\textbf{log(}$\mathbf{2}$\textbf{)}}} & \textbf{0.044} & \textbf{0.049} & \textbf{0.046} & \textbf{0.050} & \textbf{0.042} & \textbf{0.051} \\
 & \multicolumn{1}{|c|}{}  & \textbf{(0.034)} & \textbf{(0.033)} & \textbf{(0.037)} & \textbf{(0.050)} & \textbf{(0.042)} & \textbf{(0.048)} \\[4pt]
 & \multicolumn{1}{|c|}{\multirow{2}{*}{$\log(1.75)$}} & 0.091 & 0.101 & 0.188 & 0.183 & 0.657 & 0.555 \\
 & \multicolumn{1}{|c|}{}  & (0.077) & (0.058) & (0.175) & (0.183) & (0.621) & (0.548) \\[4pt]
 & \multicolumn{1}{|c|}{\multirow{2}{*}{$\log(1.5)$}} & 0.174 & 0.151 & 0.520 & 0.461 & 0.997 & 0.986 \\
 & \multicolumn{1}{|c|}{}  & (0.151) & (0.103) & (0.516) & (0.460) & (0.996) & (0.984) \\[4pt]
 & \multicolumn{1}{|c|}{\multirow{2}{*}{$\log(1.25)$}} & 0.295 & 0.230 & 0.885 & 0.831 & 1.000 & 1.000 \\
 & \multicolumn{1}{|c|}{}  & (0.260) & (0.163) & (0.879) & (0.822) & (1.000) & (1.000) \\[4pt]
\cline{2-8}
\cline{2-8}

\multicolumn{1}{c|}{\multirow{12}{*}{(Medium)}} 
 & \multicolumn{1}{|c|}{\multirow{2}{*}{$\log(2.5)$}} & \rule{0pt}{3ex} 0.015 & 0.023 & 0.004 & 0.005 & 0.000 & 0.000 \\
 & \multicolumn{1}{|c|}{}  & (0.002) & (0.003) & (0.002) & (0.005) & (0.000) & (0.000) \\[4pt]
 & \multicolumn{1}{|c|}{\multirow{2}{*}{$\log(2.25)$}} & 0.027 & 0.033 & 0.012 & 0.015 & 0.002 & 0.002 \\
 & \multicolumn{1}{|c|}{}  & (0.002) & (0.004) & (0.010) & (0.013) & (0.000) & (0.000) \\[4pt]
 & \multicolumn{1}{|c|}{\multirow{2}{*}{
\textbf{log(}$\mathbf{2}$\textbf{)}}} & \textbf{0.045} & \textbf{0.051} & \textbf{0.041} & \textbf{0.053} & \textbf{0.037} & \textbf{0.045} \\
 & \multicolumn{1}{|c|}{}  & \textbf{(0.005)} & \textbf{(0.005)} & \textbf{(0.044)} & \textbf{(0.051)} & \textbf{(0.043)} & \textbf{(0.051)} \\[4pt]
 & \multicolumn{1}{|c|}{\multirow{2}{*}{$\log(1.75)$}} & 0.075 & 0.084 & 0.144 & 0.153 & 0.470 & 0.426 \\
 & \multicolumn{1}{|c|}{}  & (0.007) & (0.006) & (0.144) & (0.143) & (0.462) & (0.421) \\[4pt]
 & \multicolumn{1}{|c|}{\multirow{2}{*}{$\log(1.5)$}} & 0.117 & 0.112 & 0.382 & 0.340 & 0.968 & 0.926 \\
 & \multicolumn{1}{|c|}{}  & (0.014) & (0.007) & (0.377) & (0.336) & (0.967) & (0.920) \\[4pt]
 & \multicolumn{1}{|c|}{\multirow{2}{*}{$\log(1.25)$}} & 0.156 & 0.125 & 0.735 & 0.616 & 1.000 & 1.000 \\
 & \multicolumn{1}{|c|}{}  & (0.018) & (0.008) & (0.727) & (0.614) & (1.000) & (0.999) \\[4pt]
\cline{2-8}
\cline{2-8}

\multicolumn{1}{c|}{\multirow{12}{*}{(High)}} 
 & \multicolumn{1}{|c|}{\multirow{2}{*}{$\log(2.5)$}} & \rule{0pt}{3ex} 0.031 & 0.026 & 0.009 & 0.012 & 0.000 & 0.001 \\
 & \multicolumn{1}{|c|}{}  & (0.000) & (0.000) & (0.008) & (0.008) & (0.000) & (0.000) \\[4pt]
 & \multicolumn{1}{|c|}{\multirow{2}{*}{$\log(2.25)$}} & 0.039 & 0.043 & 0.019 & 0.025 & 0.004 & 0.007 \\
 & \multicolumn{1}{|c|}{}  & (0.000) & (0.000) & (0.016) & (0.016) & (0.005) & (0.007) \\[4pt]
 & \multicolumn{1}{|c|}{\multirow{2}{*}{
\textbf{log(}$\mathbf{2}$\textbf{)}}} & \textbf{0.046} & \textbf{0.052} & \textbf{0.047} & \textbf{0.052} & \textbf{0.042} & \textbf{0.048} \\
 & \multicolumn{1}{|c|}{}  & \textbf{(0.000)} & \textbf{(0.000)} & \textbf{(0.040)} & \textbf{(0.034)} & \textbf{(0.045)} & \textbf{(0.047)} \\[4pt]
 & \multicolumn{1}{|c|}{\multirow{2}{*}{$\log(1.75)$}} & 0.055 & 0.052 & 0.101 & 0.092 & 0.262 & 0.235 \\
 & \multicolumn{1}{|c|}{}  & (0.000) & (0.000) & (0.088) & (0.053) & (0.258) & (0.235) \\[4pt]
 & \multicolumn{1}{|c|}{\multirow{2}{*}{$\log(1.5)$}} & 0.066 & 0.059 & 0.198 & 0.144 & 0.717 & 0.590 \\
 & \multicolumn{1}{|c|}{}  & (0.000) & (0.000) & (0.168) & (0.070) & (0.712) & (0.585) \\[4pt]
 & \multicolumn{1}{|c|}{\multirow{2}{*}{$\log(1.25)$}} & 0.076 & 0.073 & 0.308 & 0.183 & 0.977 & 0.913 \\
 & \multicolumn{1}{|c|}{}  & (0.000) & (0.000) & (0.290) & (0.081) & (0.973) & (0.910) \\[4pt]
\cline{2-8}
\end{tabular}
\caption{\it Simulated rejection probabilities of the  test \eqref{bootstrap3} (blank) and TOST-type test \eqref{FOST} (brackets) for the subgroups-versus-population hypotheses \eqref{weighted_hyp} with significance level $\alpha = 0.05$ and consistency threshold $\theta_c = \max\{p,1-p\} \log(2)$. The bold-font rows mark the null boundary. For each interaction coefficient $\beta_{TS}$ the dropout rate $\gamma$ is adjusted to keep a censoring proportion of 20 \% in table (Low), 50 \% in table (Medium) and 80 \% in table (High).}
\label{tab2}
\end{table}

\textbf{Results (Table \ref{tab1}).}
\textit{Type I error rate and power.} The simulation results confirm the asymptotic validity of the tests established in Theorem \ref{thm1} and Theorem \ref{thm2}: As the sample size $n$ grows the rejection probability of both tests converges to 0 for $|\beta_{TS}| > \theta_c$, $\alpha$ for $|\beta_{TS}| = \theta_c$ and 1 for $|\beta_{TS}| < \theta_c$, where $\theta_c = \log(2)$ and $\alpha = 0.05$. Overall, both tests control the Type I error rate in all three sample size regimes with the TOST having a slightly smaller Type I error rate than the  test \eqref{bootstrap1} in the interior of the null ($|\beta_{TS}| > \theta_c$). On the other hand, the test \eqref{bootstrap1} provides a, in some cases significantly, better approximation of the nominal level $\alpha$ on the null boundary ($|\beta_{TS}| = \theta_c$) and is also consistently, and in some cases substantially, more powerful under the alternative hypothesis ($|\beta_{TS}| < \theta_c$). For example, for $n=200$ and medium censoring, the rejection probabilities of the TOST on the null boundary are 0.008 and 0.000 in the balanced and unbalanced cases, respectively, whereas the corresponding rejection probabilities of the  test \eqref{bootstrap1} are $0.050$ and $0.052$. Considering the alternative $\beta_{TS} = \log(1.25)$ in the same censoring regime, the TOST only achieves a power of $0.042$ and $0.000$ for both cases, whereas the test \eqref{bootstrap1} provides a power of 0.176 and 0.120 for those cases. The superiority of the  test \eqref{bootstrap1} is remarkable in the small sample size scenario $n=200$ with medium to high censoring, becomes much less significant for the medium sample size $n=800$ and truly negligible in the large sample size scenario $n=5000$. The theoretical explanation for this observation is that the rejection probabilities of the two tests have identical limits which is why their difference vanishes as $n \to \infty$ (see Theorem \ref{thm1} and Theorem \ref{thm2}). \smallskip

\textit{The effect of subgroup sample size design.} Comparing the balanced ($n_0 = n_1$) and unbalanced ($n_0 \neq n_1$) case, we can note that while the Type I error rate of both tests does not differ significantly, the tests are usually less powerful in the unbalanced case compared to the balanced case. The loss in power can be notable for both tests, however the TOST seems to suffer more from an unbalanced design than the test \eqref{bootstrap1}. For example, for $n = 800$, high censoring (80 \%) and $\beta_{TS} = \log(1.5)$ the simulated power of the test \eqref{bootstrap1} and TOST is 0.209, respectively, 0.185 for $n_0 = n_1$ which reduces to 0.139, respectively, 0.065 when $n_0 \neq n_1$. In the same sub-table, for $\beta_{TS} = \log(1.25)$, the simulated power of the  test \eqref{bootstrap1} is $0.327$ which reduces to $0.179$ in the unbalanced case while the simulated power of the TOST is very similar, 0.313, but reduces to $0.071$ in the unbalanced case. Independent of the test in use, a balanced design is clearly preferable for consistency testing. \smallskip

\textit{The effect of censoring.} Censoring has very little effect on the Type I error rate of the test \eqref{bootstrap1}, in particular, the test still approximates the nominal level $\alpha$ well on the null boundary even in the high censoring and small sample size scenario. In contrast, the Type I error rate of the TOST is more sensitive to censoring and even reduces to zero on the null boundary for medium and high censoring in the small sample size scenario. Censoring has a major negative impact on the power of both tests and can decrease it considerably, however in the small sample size scenario the TOST is more affected by higher levels of censoring than the  test \eqref{bootstrap1}. In the medium and high censoring cases the TOST mostly has a power of zero under all three alternatives while the power of the  test \eqref{bootstrap1} still ranges from 7 \% to 18 \% in the medium censoring case and from 5 \% to 10 \% in the high censoring case. \smallskip

\textbf{Results (Table 2).} The overall findings are very similar to Table 1. In particular, the test \eqref{bootstrap3} is consistently more powerful than the test \eqref{FOST} based on the TOST principle and also significantly improves the approximation of the nominal level in the small sample size scenario. However, as indicated by Theorem \ref{thm4}, one disadvantage of the  test \eqref{bootstrap3} is that it tends to be conservative on the null boundary for $p= 1/2$, since the weight map $w(p) = \max \{p, 1- p \}$ is not differentiable at this point. Indeed, we can observe that even for the large sample size $n=5000$ the Type I error rate of the  test \eqref{bootstrap3} stays well below $\alpha = 0.05$ in all three censoring regimes on the null boundary.

\section{A CANTOS-motivated case study}
\label{sec5}

The practical question introduced in Section \ref{sec1} arising from the CANTOS trial provided a central motivation for the methodological developments in this paper (\cite{ridker2017antiinflammatory}). In this section, we return to this question and apply the developed consistency tests \eqref{bootstrap1} and \eqref{bootstrap3} to simulated data which closely resembles the real dataset of the CANTOS trial in order to assess the subgroup treatment effect consistency. \smallskip
 
 \textbf{CANTOS.}
 The Canakinumab Anti-Inflammatory Thrombosis Outcomes Study (CANTOS) was a randomized, double-blind, placebo-controlled Phase III trial lasting from April 2011 to June 2017 that studied the treatment effect of canakinumab on patients with a high risk for cardiovascular disease. Patients were randomized to placebo and three doses of canakinumab (50 mg, 150 mg and 300 mg) of which the 150 mg dose proved to be the most promising. The primary efficacy end point was the time to nonfatal myocardial infarction, any nonfatal stroke, or cardiovascular death measured in days after randomization to treatment. The study concluded that canakinumab is able to (significantly) reduce the rate of these cardiovascular events compared to placebo. The trial satisfied the proportional hazards assumption reasonably well and so fits into the context of this work. \smallskip

\textbf{Motivating question.} As the overall effect of canakinumab was positive, assessing the treatment effect consistency between trial subgroups and the population becomes a matter of interest, particularly for the 150 mg dose. Motivated by general regulatory interest in sex-specific treatment effects, we are interested in testing the consistency of treatment effects between women and the overall population. In other words, we aim to answer the question: \textit{Is the treatment effect in women sufficiently close to the treatment effect in the overall population?} If the answer is affirmative, this would strengthen the applicability of canakinumab to women and support the transfer of the overall trial finding to this subgroup. \smallskip 

\textbf{Simulated trial data.} The real dataset of the CANTOS trial is confidential, however our simulation design approximates the characteristics of the trial. We fix a sample size of $n= 5500$ patients which were randomized to placebo (3349) and treatment (2151) as well as male ($n_0 = 4103$) and female ($n_1 = 1397$) patients, where we assume a true female subgroup prevalence of $p=0.25$ (approximately matching the observed proportion in CANTOS). Accordingly, in the covariate vector $X_i = (X_{iT}, X_{iS})^\top$ corresponding to the $i$th patient, the binary covariates $X_{iT}$ and $X_{iS}$ decode the treatment group and sex, respectively. We assume all survival times in the proportional hazards model \eqref{hdn1} are exponentially distributed with (constant) baseline hazard function given by $h_0(t) = \exp(-9.3)$ and parameter vector $\beta = (-0.25, -0.2,\, \log(1.1))^\top$, where $\beta_T = -0.25$ represents the treatment effect of canakinumab, $\beta_S = -0.2$ the effect of sex and $\beta_{TS} = \log(1.1) \approx 0.10$ the treatment-by-sex interaction effect. Censoring times are generated iid from the random variable $C =\min(C_1, C_2)$, where the random variable $C_1 = \beta(1,1.5) \cdot 1000 + 1000$ represents censoring by study end and the random variable $C_2 = \text{Exp}(0.0000256)$ represents censoring by dropout, resulting in about 90 \% censoring. Here, $\beta(\cdot,\cdot)$ denotes a Beta-distributed random variable and $\text{Exp}(\cdot)$ denotes an exponential distributed random variable. \smallskip

\textbf{Consistency testing.} Note that we assume an interaction-effect exists, that is, sex has an influence on the treatment effect of canakinumab, however the effect $\beta_{TS}$ is small $-$ the two subgroup hazard ratios only differ by about 10 \%. In this case, a conventional interaction test would only be able to (correctly) reject the null hypothesis of no interaction, but this would yield no information about the \textit{size} of the interaction. However, judging the size of the interaction is exactly what is needed to answer the motivating question and the developed consistency tests can be used for this purpose. \smallskip

We start by testing the treatment effect consistency between men and women directly using Algorithm \ref{alg1}, as this is the strongest, and so most informative, type of consistency. We use the consistency margin $\theta_c = \log(1.6) \approx 0.47$ for the hypotheses \eqref{hypothesis1}. The estimated magnitude of the treatment-by-sex interaction coefficient is given by $|\hat \beta_{TS}| = 0.15$. The bootstrap quantile in \eqref{bootstrap1} for $\alpha = 0.05$ $(0.1)$ is $q^{*(1)}_\alpha = 0.12$ $(0.20)$, hence treatment effect consistency between men and women cannot be claimed for $\alpha = 0.05$, but for $\alpha = 0.1$.  The corresponding $p$-value  is given by $p_{alg} = 0.06$.  \medskip

Next, we assess the treatment effect consistency between men, women and the overall population simultaneously using Algorithm \ref{alg3} at the same consistency margin, which is the weaker statement. We estimate the unknown female subgroup prevalence $p$ by the proportion of female patients in the trial which is given by $\hat p = 0.254$, hence the estimated weighted absolute interaction is $\max(\hat p, 1 - \hat p) |\hat \beta_{TS}| = 0.11$. The bootstrap quantile in \eqref{bootstrap3} for $\alpha = 0.05$ $(0.1)$ is $q^{*(3)}_\alpha = 0.21$ $(0.27)$, hence treatment effect consistency between men, women and the overall population can be claimed at both significance levels. The corresponding $p$-value \eqref{pvalue_boot2} is given by $p_{boot} = 0.01$.

\section{Concluding remarks and future research}
\label{sec6}

In this paper, we developed a formal framework for assessing subgroup treatment-effect consistency for time-to-event outcomes within the Cox proportional hazards model. In contrast to conventional interaction testing, which examines whether subgroup-specific treatment effects are exactly equal, we formulate subgroup consistency as an equivalence problem. This formulation addresses the more clinically relevant question of whether differences between treatment effects are sufficiently small to be considered negligible according to a prespecified consistency margin.
\medskip

We considered two related inferential questions. The first concerns consistency of the treatment effects between two complementary subgroups. The second concerns consistency of one or both subgroup-specific treatment effects with the treatment effect in the overall population. The former represents the stronger requirement: consistency between the two subgroup effects implies consistency with the overall population effect at the same margin, whereas the converse does not generally hold. The appropriate formulation should therefore be determined by the scientific objective of the analysis.
\medskip

For each setting, we developed a conventional TOST-type procedure and a novel procedure motivated by optimal tests for consistency hypotheses in normally distributed data. The proposed procedure exploits the asymptotic normality of the partial likelihood estimator and uses critical values based on the corresponding folded normal distribution. Both procedures are asymptotically valid, but the proposed test is less conservative and more powerful than the TOST-type procedure under the conditions considered.
\medskip

The interpretation of the proposed analyses depends critically on the choice of the consistency margin. This margin should reflect the largest difference in treatment effects that would still be considered clinically acceptable and should ideally be prespecified in consultation with clinical experts. Establishing consistency does not imply that the treatment is efficacious within each subgroup, nor does it demonstrate that the treatment-by-subgroup interaction is exactly zero. Conversely, failure to establish consistency does not prove that clinically relevant heterogeneity is present; it may instead reflect limited information, particularly for small subgroups or when censoring is substantial.
\medskip

Several limitations suggest directions for further research. The proposed methods rely on the proportional hazards assumption and on independent, non-informative censoring. Extensions to settings with non-proportional hazards could be based on time-varying treatment effects or alternative estimands such as restricted mean survival time or milestone survival probabilities. The present work is also restricted to a single prespecified binary subgroup variable defining two complementary subgroups. Extensions to more than two subgroups, continuous effect modifiers, overlapping or cross-classified subgroups, and simultaneous evaluation of multiple subgroup variables would be valuable. Such extensions would also require appropriate consideration of multiplicity.
\medskip

A related direction is the development of methods for detecting clinically relevant heterogeneity rather than confirming consistency. Although the corresponding hypotheses can be obtained by reversing the roles of the null and alternative hypotheses in this work, the choice and interpretation of the clinical margin would need to be reconsidered. More generally, outside the proportional hazards framework, subgroup consistency may involve comparisons between entire time-varying treatment-effect functions rather than finite-dimensional parameters, requiring new asymptotic and resampling methods.

\bigskip

{\bf Software} At the GitHub repository \href{https://github.com/LKoletzko/Testing-for-subgroup-treatment-effect-consistency-in-the-Cox-model/tree/main}{github/LKoletzko}, we provide the R code that was used for the simulations and the case study as well as an R implementation of Algorithms \ref{alg1}-\ref{alg3} presented in this paper for direct practical use, together with detailed documentation.

\medskip

{\bf Acknowledgements} This  work has been funded by the Deutsche Forschungsgemeinschaft (DFG, German Research Foundation) — Project-ID 499552394 - SFB 1597 and  by the European Union through the European Joint Programme on Rare Diseases under the European Union's Horizon 2020 Research and Innovation Programme Grant Agreement Number 825575. HD is task leader of RealiseD supported by the Innovative Health Initiative Joint Undertaking (IHI JU) under grant agreement No. 101165912. The JU receives support from the European Union’s Horizon Europe research and innovation programme and COCIR, EFPIA, Europa Bío, MedTech Europe, and Vaccines Europe. Views and opinions expressed are those of the author(s) only. This publication reflects the author’s views. They do not necessarily reflect those of the Innovative Health Initiative Joint Undertaking and its members, who cannot be held responsible for them.

\bibliographystyle{chicago}
\bibliography{references}


\section{Supplementary Material}
\label{sec7}

In this supplement, we prove the asymptotic validity of all six tests provided in this work and also establish Assumption (P1) for the mle and outcome model-based estimator of the unknown subgroup prevalence $p$. Throughout this paper we assume that the regularity conditions for the asymptotic normality of the partial likelihood estimator $\hat \beta = ( \hat \beta_T, \hat \beta_S, \hat \beta_{TS})^\top$ are satisfied \citepSM[see, for example, conditions A-D in][]{andersen1982}. It then follows that 
    \begin{align} \label{p1}
      \mathcal{I}^{1/2}(\beta) \big ( \hat \beta - \beta \big ) \tod \mathcal{N}_3(0, I_3),
    \end{align} 
where $\mathcal{N}_3(0, I_3)$ denotes a $3$-dimensional standard normal distribution. Moreover, in the same reference it is shown that $ \frac{1}{n} \mathcal{I}(\beta) \topr \mathcal{I}_0(\beta)$ such that 
    \begin{align} \label{p1a}
      \sqrt{n}  \big ( \hat \beta - \beta \big ) \tod \mathcal{N}_3(0, \mathcal{I}^{-1}_0(\beta)).
    \end{align}

\subsection{Asymptotic validity of the tests \texorpdfstring{\eqref{TOST1}}{(TOST1)} and \texorpdfstring{\eqref{bootstrap1}}{(bootstrap1)}}

\begin{theorem}[Asymptotic validity of TOST \eqref{TOST1}]
\label{thm1}
{ \it 
    For all $\alpha \in (0,0.5)$ the TOST \eqref{TOST1} satisfies
    \begin{align*}
        \mathbb P\Big( \big[ \hat \beta_{TS} - z_{1- \alpha } \, \hat \sigma_{\hat \beta_{TS}}, \ 
\hat \beta_{TS} + z_{1- \alpha} \, \hat \sigma_{\hat \beta_{TS}} \big] \subseteq (- \theta_c, \theta_c) \Big) \ton \begin{cases}
            0 &, \quad |\beta_{TS}| > \theta_c \\
            \alpha &, \quad |\beta_{TS}| = \theta_c \\
            1 &, \quad |\beta_{TS}| < \theta_c.
        \end{cases}
    \end{align*}
}
\end{theorem}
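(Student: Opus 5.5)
The plan is to reduce everything to the asymptotic normality \eqref{p1a} of the partial likelihood estimator, together with consistency of the plug-in variance estimator. From \eqref{p1a} the marginal statement is $\sqrt n(\hat\beta_{TS}-\beta_{TS})\tod\mathcal N(0,\sigma^2)$ with $\sigma^2:=[\mathcal I_0^{-1}(\beta)]_{33}>0$. Next we need $n\hat\sigma^2_{\hat\beta_{TS}}=[(\tfrac1n\mathcal I(\hat\beta))^{-1}]_{33}\topr\sigma^2$. This would follow from $\tfrac1n\mathcal I(\hat\beta)\topr\mathcal I_0(\beta)$, which holds under the Andersen--Gill conditions A--D: $\hat\beta$ is consistent and $\beta\mapsto\tfrac1n\mathcal I(\beta)$ converges uniformly on a neighbourhood of the true $\beta$. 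We then add continuity and invertibility of the matrix inverse at $\mathcal I_0(\beta)$. By Slutsky's lemma, $T_n:=(\hat\beta_{TS}-\beta_{TS})/\hat\sigma_{\hat\beta_{TS}}\tod\mathcal N(0,1)$, while $\hat\sigma_{\hat\beta_{TS}}=O_P(n^{-1/2})\to0$ in probability.

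The rejection event is $A_n=\{\hat\beta_{TS}+z_{1-\alpha}\hat\sigma<\theta_c\}\cap\{\hat\beta_{TS}-z_{1-\alpha}\hat\sigma>-\theta_c\}$, written with $\hat\sigma=\hat\sigma_{\hat\beta_{TS}}$, and the three regimes are handled separately.

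If $|\beta_{TS}|<\theta_c$, then $\hat\beta_{TS}\pm z_{1-\alpha}\hat\sigma\topr\beta_{TS}\in(-\theta_c,\theta_c)$, so $\mathbb P(A_n)\to1$.

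If $|\beta_{TS}|>\theta_c$, say $\beta_{TS}>\theta_c$, then $\mathbb P(A_n)\le\mathbb P(\hat\beta_{TS}+z_{1-\alpha}\hat\sigma<\theta_c)\to0$, since the left-hand side converges in probability to $\beta_{TS}>\theta_c$. The case $\beta_{TS}<-\theta_c$ is symmetric.

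The boundary $\beta_{TS}=\theta_c$ is the delicate case. Here the first event equals $\{T_n<-z_{1-\alpha}\}$, whose probability tends to $\Phi(-z_{1-\alpha})=\alpha$ because the limit is continuous. The second event equals $\{T_n>-z_{1-\alpha}-2\theta_c/\hat\sigma\}$. Since $2\theta_c/\hat\sigma\topr\infty$ and $T_n=O_P(1)$, this event has probability tending to one. Using $\mathbb P(B\cap C)\ge \mathbb P(B)-\mathbb P(C^c)$ together with $\mathbb P(B\cap C)\le\mathbb P(B)$ gives $\mathbb P(A_n)\to\alpha$. The case $\beta_{TS}=-\theta_c$ is symmetric, with the roles of the two intervals swapped. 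The assumption $\alpha<0.5$ guarantees $z_{1-\alpha}>0$, so the interval is nondegenerate; the argument does not otherwise depend on it.

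The main obstacle is the variance-consistency step, namely justifying $\tfrac1n\mathcal I(\hat\beta)\topr\mathcal I_0(\beta)$ at the estimated rather than the true parameter. For this I would invoke the uniform convergence of $\tfrac1n\mathcal I(\cdot)$ in a neighbourhood of $\beta$ established in \citetSM{andersen1982} (the proof of their Theorem 3.2), combined with consistency of $\hat\beta$. Once this is in place, the remaining steps are elementary Slutsky-type arguments.
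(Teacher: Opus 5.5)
Your proof is correct and is essentially the paper's argument. The paper rewrites the inclusion as $|\hat\beta_{TS}| < \theta_c + z_\alpha\hat\sigma_{\hat\beta_{TS}}$ and, on the boundary, uses $|\hat\beta_{TS}| = \sgn(\beta_{TS})\hat\beta_{TS}$ with probability tending to one; this is the same as your observation that the far-side endpoint condition holds with probability tending to one. Your explicit justification of $\frac1n\mathcal I(\hat\beta)\topr\mathcal I_0(\beta)$ makes precise a studentization step that the paper only attributes to Slutsky's lemma.

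One harmless sign slip: the second boundary event is $\{T_n > z_{1-\alpha} - 2\theta_c/\hat\sigma\}$, not $\{T_n > -z_{1-\alpha} - 2\theta_c/\hat\sigma\}$. This changes nothing, since $2\theta_c/\hat\sigma \topr \infty$.
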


\begin{proof}
First, note that  the decision rule \eqref{TOST1} is equivalent to 
$  | \hat \beta_{TS} | < \theta_c + z_{\alpha} \, \hat \sigma_{\hat \beta_{TS}}$,
and that the weak convergence \eqref{p1}, the continuous mapping theorem and Slutsky's Lemma yield
    \begin{align}
    \label{l1}
      \frac{\hat \beta_{TS} - \beta_{TS}}{\hat \sigma_{\hat \beta_{TS}}} \tod \mathcal{N}(0, 1).
    \end{align}
\textit{Case 1: $|\beta_{TS}| \neq \theta_c$.} Note that $|\hat \beta_{TS}| \topr |\beta_{TS}|$ by consistency of the partial mle. As $\hat \sigma_{\hat \beta_{TS}} \topr 0$, we have $|\hat \beta_{TS}| - z_{\alpha} \, \hat \sigma_{\hat \beta_{TS}} \topr |\beta_{TS}|$, which implies by Lemma 2.22 in \citeSM{Koletzko2024}
\begin{align*}
    \mathbb P\Big( |\hat \beta_{TS}| - z_{\alpha} \, \hat \sigma_{\hat \beta_{TS}}  < \theta_c \Big) \ton \begin{cases}
            0  &, \quad |\beta_{TS}| > \theta_c \\
            1  &, \quad |\beta_{TS}| < \theta_c.
        \end{cases}
\end{align*}

\textit{Case 2: $|\beta_{TS}| = \theta_c$.} 
    Note that $\beta_{TS} \neq 0$, since $\theta_c > 0$, and we assume without loss of generality  $\beta_{TS} >0 $.  Therefore (note that we can apply Freedman’s inequality to the martingale representation of the score to derive a concentration inequality for $ \hat \beta_{TS} -  \beta_{TS} $),
\begin{align*}
 \frac{|\hat \beta_{TS}| - \theta_c}{\hat \sigma_{\hat \beta_{TS}}} = 
      \sgn(\beta_{TS})
      \frac{\hat \beta_{TS} - \beta_{TS}}{\hat \sigma_{\hat \beta_{TS}}}   + o_{\mathbb{P}} (1) \tod \mathcal{N}(0, 1),
\end{align*}
which gives $
    \mathbb P\big(  | \hat \beta_{TS}| < \theta_c + z_{\alpha} \, \hat \sigma_{\hat \beta_{TS}} \big) \ton \alpha.
$
\end{proof}

Note that the proofs of Case 1 in all following Theorems rely on an application of Lemma 2.22 in \citeSM{Koletzko2024}, but we will refrain from mentioning it each time.

\begin{theorem}[Asymptotic validity of test \eqref{bootstrap1}]
\label{thm2}
{ \it
For all $\alpha \in (0,1)$ the  test \eqref{bootstrap1} satisfies
\begin{align*}
 \mathbb P\Big( |\hat \beta_{TS}| < q_\alpha^{*(1)} \Big) \ton \begin{cases}
            0 \ &, \quad |\beta_{TS}| > \theta_c \\
            \alpha \ &, \quad |\beta_{TS}| = \theta_c \\
            1 \ &, \quad |\beta_{TS}| < \theta_c.
        \end{cases}
\end{align*}
}
\end{theorem}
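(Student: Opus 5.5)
The plan is to reduce the test \eqref{bootstrap1} to a statement about the standardized quantity $(|\hat \beta_{TS}| - \theta_c)/\hat \sigma_{\hat \beta_{TS}}$, mirroring the proof of Theorem \ref{thm1}. The key preliminary step is an asymptotic expansion of the quantile $q_\alpha^{*(1)}$. By definition $q_\alpha^{*(1)} = \hat\sigma_{\hat \beta_{TS}}\, g(\theta_c/\hat\sigma_{\hat \beta_{TS}})$, where $g(\mu)$ is the $\alpha$-quantile of $|\mathcal N(\mu,1)|$, i.e.\ the unique solution $y\ge 0$ of $\Phi(y-\mu)-\Phi(-y-\mu)=\alpha$. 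Since $\hat\sigma_{\hat \beta_{TS}} \topr 0$ (indeed $\sqrt n\,\hat\sigma_{\hat \beta_{TS}}$ converges in probability to a positive constant by \eqref{p1a} and $\frac1n\mathcal I(\hat\beta)\topr \mathcal I_0(\beta)$), we have $\mu_n:=\theta_c/\hat\sigma_{\hat \beta_{TS}}\topr\infty$. For large $\mu$ the term $\Phi(-y-\mu)\le \Phi(-\mu)\to 0$, so $g(\mu) = \mu + z_\alpha + o(1)$ as $\mu\to\infty$. This follows by a short monotonicity/sandwich argument: $\Phi(y-\mu)-\Phi(-\mu)\le \alpha\le\Phi(y-\mu)$ gives $\mu+z_\alpha\le g(\mu)\le \mu+ \Phi^{-1}(\alpha+\Phi(-\mu))$. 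Hence
\begin{align*}
q_\alpha^{*(1)} = \theta_c + z_\alpha\,\hat\sigma_{\hat \beta_{TS}} + o_{\mathbb P}(\hat\sigma_{\hat \beta_{TS}}).
\end{align*}
In particular $q_\alpha^{*(1)}\topr\theta_c$, and the rejection rule is asymptotically the TOST rule $|\hat\beta_{TS}|<\theta_c+z_\alpha\hat\sigma_{\hat \beta_{TS}}$ up to a negligible perturbation. Note that, unlike Theorem \ref{thm1}, no restriction $\alpha<0.5$ is needed, since $g$ is well defined for all $\alpha\in(0,1)$ and the expansion holds regardless of the sign of $z_\alpha$.

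\textit{Case 1: $|\beta_{TS}|\neq\theta_c$.} Here $|\hat\beta_{TS}| - q_\alpha^{*(1)} \topr |\beta_{TS}|-\theta_c\neq 0$ by consistency of the partial likelihood estimator and the expansion above. Lemma 2.22 in \citeSM{Koletzko2024} then yields limit $0$ if $|\beta_{TS}|>\theta_c$ and $1$ if $|\beta_{TS}|<\theta_c$.

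\textit{Case 2: $|\beta_{TS}|=\theta_c$.} Assume without loss of generality $\beta_{TS}=\theta_c>0$. Then $\mathbb P(\hat\beta_{TS}<0)\to0$, so with probability tending to one $|\hat\beta_{TS}|=\hat\beta_{TS}$. The event $\{|\hat\beta_{TS}|<q_\alpha^{*(1)}\}$ is therefore asymptotically equivalent to
\begin{align*}
\frac{\hat\beta_{TS}-\beta_{TS}}{\hat\sigma_{\hat \beta_{TS}}} < z_\alpha + o_{\mathbb P}(1).
\end{align*}
By \eqref{l1} and Slutsky's lemma, together with continuity of $\Phi$, its probability converges to $\Phi(z_\alpha)=\alpha$.

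The main obstacle is making the quantile expansion rigorous with a random argument $\mu_n$. The plan is to prove the deterministic statement $g(\mu)-\mu\to z_\alpha$ as $\mu\to\infty$ first, and then transfer it to $\mu_n\topr\infty$ by a subsequence or continuous-mapping argument. A second concern is ensuring $\hat\sigma_{\hat\beta_{TS}}>0$ with probability tending to one, so that the quantile is well defined. This follows from positive definiteness of $\mathcal I_0(\beta)$, which is part of the regularity conditions assumed in \eqref{p1}.
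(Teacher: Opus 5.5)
Your proof is correct, but it handles the critical value by a different route from the paper's. The paper treats $q_\alpha^{*(1)}$ as a bootstrap quantile. In Case 1 it obtains $q_\alpha^{*(1)}\topr\theta_c$ from a conditional Markov inequality for $\beta^*_{TS}\sim\mathcal N(\theta_c,\hat\sigma^2_{\hat\beta_{TS}})$. In Case 2 it applies the bootstrap delta method to $|\cdot|$ at $\theta_c$, together with a bootstrap Slutsky lemma, to get $(q_\alpha^{*(1)}-\theta_c)/\hat\sigma_{\hat\beta_{TS}}\topr z_\alpha$.

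You instead use the folded-normal closed form. You write $q_\alpha^{*(1)}=\hat\sigma_{\hat\beta_{TS}}\,g(\theta_c/\hat\sigma_{\hat\beta_{TS}})$ and apply the deterministic sandwich $\mu+z_\alpha\le g(\mu)\le\mu+\Phi^{-1}(\alpha+\Phi(-\mu))$. This gives a single expansion $q_\alpha^{*(1)}=\theta_c+z_\alpha\hat\sigma_{\hat\beta_{TS}}+o_{\mathbb P}(\hat\sigma_{\hat\beta_{TS}})$ that serves both cases. The passage to the random argument $\mu_n\topr\infty$ is immediate, so your ``main obstacle'' is not really one.

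Your route buys three things:
\begin{itemize}
\item It is more elementary and avoids conditional weak convergence altogether.
\item It is more informative. The lower bound shows pathwise that \eqref{bootstrap1} rejects whenever the TOST \eqref{TOST1} does. The upper bound shows the two critical values differ by at most $\hat\sigma_{\hat\beta_{TS}}\{\Phi^{-1}(\alpha+\Phi(-\theta_c/\hat\sigma_{\hat\beta_{TS}}))-z_\alpha\}$, a term of order $\hat\sigma_{\hat\beta_{TS}}\Phi(-\theta_c/\hat\sigma_{\hat\beta_{TS}})$. This explains why the two tests become indistinguishable for large $n$.
\item Your Case 2 is cleaner than the paper's. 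On $\{\hat\beta_{TS}>0\}$, an event of probability tending to one, $|\hat\beta_{TS}|=\hat\beta_{TS}$ exactly, so no concentration inequality is needed.
\end{itemize}

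What the paper's route buys is reusability. The same bootstrap-delta template is what carries over to Theorem \ref{thm4}. There the resampling distribution of $\max\{\hat p^*,1-\hat p^*\}|\hat\beta^*_{TS}|$ has no folded-normal closed form, and at $p=1/2$ only directional differentiability is available. Your explicit computation extends verbatim to Algorithm \ref{alg2}, which again uses a folded normal, but not to Algorithm \ref{alg3}.
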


\begin{proof}

    \textit{Case 1: $|\beta_{TS}| \neq \theta_c$.} Note that \textit{(i)} $|\hat \beta_{TS}| \topr |\beta_{TS}|$
    and \textit{(ii)} $q_\alpha^{*(1)} \topr \theta_c$ so that $|\hat \beta_{TS}| - q_\alpha^{*(1)} \topr |\beta_{TS}| - \theta_c$ which immediately implies
\begin{align*}
    \mathbb P\Big( |\hat \beta_{TS}| - q_\alpha^{*(1)}  < 0 \Big) \ton \begin{cases}
            0  &, \quad |\beta_{TS}| - \theta_c > 0 \\
            1  &, \quad |\beta_{TS}| - \theta_c < 0.
        \end{cases}
\end{align*}
\textit{Proof of (i).} The convergence follows from the consistency of the partial mle and the continuous mapping theorem. \medskip

\textit{Proof of (ii).} Note that $\hat \sigma_{\hat \beta_{TS}}^2 \topr 0$ as a consequence of \eqref{l1} and  $\hat \beta_{TS} - \beta_{TS} \topr 0$. For $\beta_{TS}^* \sim \mathcal{N}(\theta_c, \hat \sigma_{\hat \beta_{TS}}^2)$, this implies 
\begin{align}
\label{l11}
    \beta_{TS}^* - \theta_c \topr 0 \quad \text{conditionally in probability},
\end{align}
since for any $\varepsilon > 0$
\begin{align*}
    \mathbb P\Big( |\beta_{TS}^* - \theta_c| > \varepsilon \ | \ \mathcal{D}_n \Big) \leq \frac{\mathbb E\big( (\beta_{TS}^* - \theta_c)^2  \mid \mathcal{D}_n \big)}{\varepsilon^2} = \frac{\hat \sigma_{\hat \beta_{TS}}^2}{\varepsilon^2} \topr 0
\end{align*}
by the conditional Markov inequality and since $\beta_{TS}^* - \theta_c$ is mean-zero with variance $\hat \sigma_{\hat \beta_{TS}}^2$ conditional on the data $\mathcal{D}_n$ by definition of $\beta_{TS}^*$. Because the absolute value is continuous on $\R$, the convergence \eqref{l11} then implies
\begin{align*}
    |\beta_{TS}^*| - \theta_c \topr 0 \quad \text{conditionally in probability},
\end{align*}
which finally implies for the corresponding conditional $\alpha$-quantile sequence 
\begin{align*}
    q_\alpha^{*(1)} - \theta_c \topr 0 \quad \forall \alpha \in (0,1).
\end{align*}

\textit{Case 2: $|\beta_{TS}| = \theta_c$.} By the same reasoning as in the proof of Theorem \ref{thm1}, we obtain
\begin{align}
\label{l2}
      \frac{|\hat \beta_{TS}| - \theta_c}{\hat \sigma_{\hat \beta_{TS}}} \tod \mathcal{N}(0, 1).
\end{align}
Let $\beta_{TS}^* \sim  \mathcal{N} \big( \theta_c , \ \hat \sigma_{\hat \beta_{TS}}^2 \big)$ and note that the absolute value $|\cdot|$ is differentiable at $\theta_c > 0$ with derivative $|\cdot|^{'}_{\theta_c}(x) = x$ and so the bootstrap delta method (Theorem 23.5, \citeSM{vanderVaart1998}) together with a bootstrap version of Slutsky's Lemma imply
\begin{align*}
   \frac{|\beta_{TS}^*| - \theta_c}{\hat \sigma_{\hat \beta_{TS}}}  \tod \mathcal{N}(0,1) \quad \text{conditionally in probability,}
\end{align*}
hence the corresponding conditional $\alpha$-quantile sequence satisfies
\begin{align}
\label{l3}
    \frac{q_\alpha^{*(1)} - \theta_c}{\hat \sigma_{\hat \beta_{TS}}}  \topr z_\alpha \qquad \forall \alpha \in (0,1).
\end{align}
Now, the two convergences \eqref{l2} and \eqref{l3} together imply
\begin{align*}
    \mathbb P \Big( |\hat \beta_{TS}| < q_\alpha^{*(1)} \Big) \ton \alpha \qquad \forall \alpha \in (0,1).
\end{align*}

\end{proof}

\subsection{Asymptotic validity of the tests \texorpdfstring{\eqref{TOST2}}{(TOST2)}, \texorpdfstring{\eqref{bootstrap2}}{(bootstrap2)}, \texorpdfstring{\eqref{FOST}}{(FOST)} and \texorpdfstring{\eqref{bootstrap3}}{(bootstrap3)}}

Throughout this section we assume 
that the estimator $\hat p$ for $p$ satisfies
\begin{align}
\label{p1b}
\sqrt{n} \big( (\hat \beta_{TS}, \hat p)^\top - (\beta_{TS}, p)^\top \big) \tod \mathcal{N}_2(0, V)
\end{align}
and $V$ can be consistently estimated from the data by some $\hat V$. Consequently, for large sample sizes $n$, the vector $(\hat \beta_{TS}, \hat p)^\top$ is approximately normal distributed  and the variance $\Sigma$ can be approximated by $\hat \Sigma := \hat V/n$.

\begin{theorem}[Asymptotic validity of TOST \eqref{TOST2}]
\label{thm5}
{ \it If $w$ is continuously differentiable at $p$, the TOST \eqref{TOST2} satisfies for all $\alpha \in (0,0.5)$
\begin{align*}
        \mathbb P\Big( \hat I_n^{w(p)} 
    \subseteq (-\theta_c, \theta_c) \Big) \ton \begin{cases}
            0 &, \quad w(p)|\beta_{TS}| > \theta_c \\
            \alpha &, \quad w(p)|\beta_{TS}| = \theta_c \\
            1 &, \quad w(p)|\beta_{TS}| < \theta_c.
        \end{cases}
\end{align*}
}
\end{theorem}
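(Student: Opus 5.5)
We mirror the proof of Theorem \ref{thm1}, with $\hat\beta_{TS}$ replaced by the transformed estimator $\hat\gamma := w(\hat p)\hat\beta_{TS}$ and $\theta := w(p)\beta_{TS}$.

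The first step rewrites the decision rule. The interval inclusion \eqref{TOST2} is equivalent to
\[
|w(\hat p)\hat\beta_{TS}| < \theta_c + z_\alpha \hat\sigma_{w(p)}.
\]
Since $w$ maps into $(0,1)$, we have $|w(\hat p)\hat\beta_{TS}| = w(\hat p)|\hat\beta_{TS}|$.

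The second step establishes asymptotic normality of the transformed estimator via the delta method. Consider $g(b,q)=w(q)b$, which is continuously differentiable at $(\beta_{TS},p)$ with gradient $\nabla g = (w(p), w'(p)\beta_{TS})$. From \eqref{p1b} we obtain
\[
\sqrt n(\hat\gamma-\theta)\tod \mathcal N(0,\sigma_w^2), \qquad \sigma_w^2=\nabla g\, V\, \nabla g^\top .
\]
Continuity of $w'$ at $p$, consistency of $(\hat\beta_{TS},\hat p)$ and $\hat V\topr V$ together give $n\hat\sigma^2_{w(p)} = \nabla g(\hat\beta_{TS},\hat p)\hat V\nabla g(\hat\beta_{TS},\hat p)^\top \topr \sigma_w^2$. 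Slutsky's Lemma then yields
\[
(\hat\gamma-\theta)/\hat\sigma_{w(p)}\tod\mathcal N(0,1),
\]
provided $\sigma_w^2>0$. This positivity needs to be assumed, or argued from positive definiteness of $V$ together with $w(p)>0$, so that $\nabla g\neq 0$. In particular $\hat\sigma_{w(p)}\topr 0$.

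The third step treats the case $w(p)|\beta_{TS}|\neq\theta_c$. Here $w(\hat p)|\hat\beta_{TS}| - z_\alpha\hat\sigma_{w(p)} \topr w(p)|\beta_{TS}|$, and Lemma 2.22 in \citeSM{Koletzko2024} gives the limits $0$ and $1$ exactly as in Case 1 of Theorem \ref{thm1}.

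The fourth step treats the boundary case $w(p)|\beta_{TS}|=\theta_c$. Then $\beta_{TS}\neq 0$, so with probability tending to one $\sgn(\hat\beta_{TS})=\sgn(\beta_{TS})$. Consequently
\[
\frac{w(\hat p)|\hat\beta_{TS}|-\theta_c}{\hat\sigma_{w(p)}} = \sgn(\beta_{TS})\,\frac{\hat\gamma-\theta}{\hat\sigma_{w(p)}} + o_{\mathbb P}(1) \tod \mathcal N(0,1).
\]
It follows that the rejection probability $\mathbb P\bigl((\cdot)<z_\alpha\bigr)$ tends to $\Phi(z_\alpha)=\alpha$.

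The main obstacle is this sign-matching step. It is cleaner than in Theorem \ref{thm1}: on the event $\{\sgn\hat\beta_{TS}=\sgn\beta_{TS}\}$ the identity holds exactly, not just up to $o_{\mathbb P}(1)$, and this event has probability tending to one by consistency of $\hat\beta_{TS}$. No concentration inequality is therefore needed. The only remaining subtlety is the nondegeneracy $\sigma_w^2>0$ used in the second step.
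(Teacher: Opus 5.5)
Your proposal is correct and follows essentially the same route as the paper: the same rewriting $w(\hat p)|\hat\beta_{TS}|<\theta_c+z_\alpha\hat\sigma_{w(p)}$, a consistency argument off the boundary, and a delta-method argument on the boundary. The only difference is cosmetic. The paper applies the delta method directly to $(x,y)\mapsto |x|w(y)$, which is differentiable at $(\beta_{TS},p)$ since $\beta_{TS}\neq 0$, whereas you apply it to $(b,q)\mapsto w(q)b$ and remove the absolute value by sign-matching, which gives the same limit and variance; the condition $\sigma_w^2>0$ that you state explicitly is one the paper uses without stating it.
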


\begin{proof} First, note that the decision rule  \eqref{TOST2} is equivalent to $w(\hat p)| \hat \beta_{TS} | < \theta_c + z_{\alpha} \hat \sigma_{w(p)}$. \medskip

\textit{Case 1: $w(p) |\beta_{TS}| \neq \theta_c$.} Note that $w(\hat p)|\hat \beta_{TS}| \topr w(p)|\beta_{TS}|$ by \eqref{p1b} and the continuous mapping theorem and $ \hat \sigma_{w(p)} \topr 0$. Therefore, $w(\hat p)|\hat \beta_{TS}| - z_{\alpha} \hat \sigma_{w(p)} \topr w(p)|\beta_{TS}|$ which immediately implies
\begin{align*}
    \mathbb P\Big( w(\hat p)|\hat \beta_{TS}| - z_{\alpha} \hat \sigma_{w(p)} < \theta_c \Big) \ton \begin{cases}
            0  &, \quad w(p)|\beta_{TS}| > \theta_c \\
            1  &, \quad w(p)|\beta_{TS}| < \theta_c.
        \end{cases}
\end{align*}

\textit{Case 2: $w(p)|\beta_{TS}| = \theta_c$.} Note that $\beta_{TS} \neq 0$, since $\theta_c > 0$, hence the absolute value $|\cdot|$ is differentiable at $\beta_{TS}$ and $w$ is differentiable at $p$ by assumption, which implies that the map
\begin{align*}
    f: \R^2 \to \R, \quad f(x,y) = |x| w(y)
\end{align*}
is differentiable at $(\beta_{TS},p)^\top$ with derivative given by
\begin{align*}
    f_{(\beta_{TS},p)}^{'}(a,b) = (\sgn(\beta_{TS}) w(p), |\beta_{TS}| w^{'}(p)) \cdot (a,b)^\top.
\end{align*}
Therefore, the delta method (Theorem 3.1, \citeSM{vanderVaart1998}) applied to \eqref{p1b} yields (note $w(p)|\beta_{TS}| = \theta_c$)
\begin{align*}
      \frac{w(\hat p)|\hat \beta_{TS}| - \theta_c}{ \hat \sigma_{w(p)}} \tod \mathcal{N}(0, 1),
\end{align*}
which implies  $
    \mathbb P\big(  w(\hat p)| \hat \beta_{TS}| < \theta_c + z_{\alpha} \hat \sigma_{w(p)} \big) \ton \alpha.$
\end{proof}

\begin{theorem}[Asymptotic validity of test \eqref{bootstrap2}]
\label{thm6}
    { \it
If $w$ is continuously differentiable at $p$, then the  test \eqref{bootstrap2} satisfies for all $\alpha \in (0,1)$
\begin{align*}
 \mathbb P\Big( w(\hat p) \, |\hat \beta_{TS}|  < q_\alpha^{*(2)} \Big) \ton \begin{cases}
            0 \ &, \quad w(p)|\beta_{TS}| > \theta_c \\
            \alpha \ &, \quad w(p)|\beta_{TS}| = \theta_c \\
            1 \ &, \quad w(p)|\beta_{TS}| < \theta_c.
        \end{cases}
\end{align*}
}
\end{theorem}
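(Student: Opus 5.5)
I would follow the structure of the proof of Theorem \ref{thm2} and combine it with the delta-method argument from the proof of Theorem \ref{thm5}. The test statistic is $w(\hat p)|\hat\beta_{TS}|$. The critical value $q_\alpha^{*(2)}$ is the conditional $\alpha$-quantile of $|\beta^*|$ with $\beta^*\sim\mathcal N(\theta_c,\hat\sigma_{w(p)}^2)$. Hence everything reduces to two facts: the joint behaviour of $\hat\sigma_{w(p)}$ and of the statistic, and an explicit description of the quantile. I would split into the same two cases as before.

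\emph{Case 1: $w(p)|\beta_{TS}|\neq\theta_c$.}
\begin{itemize}
\item By \eqref{p1b} and the continuous mapping theorem, $w(\hat p)|\hat\beta_{TS}|\topr w(p)|\beta_{TS}|$. This uses continuity of $w$ at $p$, which follows from differentiability.
\item Next, $\hat\sigma_{w(p)}\topr 0$. Since $\hat\Sigma=\hat V/n$ with $\hat V\topr V$, and $(w(\hat p),w'(\hat p)\hat\beta_{TS})$ is bounded in probability, \eqref{varestim} is $O_{\mathbb P}(1/n)$. Continuity of $w'$ at $p$ (continuous differentiability) gives $w'(\hat p)\topr w'(p)$.
\item Exactly as in part (ii) of the proof of Theorem \ref{thm2}, the conditional Markov inequality gives $|\beta^*|-\theta_c\topr0$ conditionally in probability. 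Hence $q_\alpha^{*(2)}\topr\theta_c$.
\item Therefore $w(\hat p)|\hat\beta_{TS}|-q_\alpha^{*(2)}\topr w(p)|\beta_{TS}|-\theta_c\neq 0$. Lemma 2.22 of \citeSM{Koletzko2024} then yields the limits $0$ and $1$.
\end{itemize}

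\emph{Case 2: $w(p)|\beta_{TS}|=\theta_c$.}
\begin{itemize}
\item Since $\theta_c>0$ and $w>0$, we have $\beta_{TS}\neq0$. The map $f(x,y)=|x|w(y)$ is then differentiable at $(\beta_{TS},p)$, as computed in the proof of Theorem \ref{thm5}.
\item The delta method gives $\sqrt n(w(\hat p)|\hat\beta_{TS}|-\theta_c)\tod\mathcal N(0,\sigma^2)$ with $\sigma^2=\nabla f^\top V\nabla f$, where $\nabla f=(\sgn(\beta_{TS})w(p),|\beta_{TS}|w'(p))$.
\item Next I show $n\hat\sigma^2_{w(p)}\topr\sigma^2$. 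The plug-in vector $(w(\hat p),w'(\hat p)\hat\beta_{TS})$ converges in probability to $(w(p),w'(p)\beta_{TS})$. This differs from $\nabla f$ by the factor $\sgn(\beta_{TS})$, which leaves the quadratic form unchanged. Together with $\hat V\topr V$ this gives the claim.
\item Assuming $\sigma^2>0$ (the non-degeneracy implicitly used in Theorem \ref{thm5}), Slutsky yields
\[(w(\hat p)|\hat\beta_{TS}|-\theta_c)/\hat\sigma_{w(p)}\tod\mathcal N(0,1).\]
\item For the quantile I would argue directly rather than through the bootstrap delta method. Write $|\beta^*|=|\theta_c+\hat\sigma_{w(p)}Z|$ with $Z$ standard normal. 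On the event $\{Z>-\theta_c/\hat\sigma_{w(p)}\}$, which has conditional probability tending to one, this equals $\theta_c+\hat\sigma_{w(p)}Z$.
\item Consequently $(q_\alpha^{*(2)}-\theta_c)/\hat\sigma_{w(p)}\topr z_\alpha$. Equivalently, use the defining equation $\Phi((y-\theta_c)/\hat\sigma)-\Phi((-y-\theta_c)/\hat\sigma)=\alpha$: the second term vanishes uniformly for $y\ge0$ as $\hat\sigma\to0$.
\item Combining the two convergences, $\mathbb P(w(\hat p)|\hat\beta_{TS}|<q_\alpha^{*(2)})\to\Phi(z_\alpha)=\alpha$.
\end{itemize}

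The main obstacle is Case 2: verifying that the plug-in variance \eqref{varestim} consistently estimates the delta-method variance after rescaling by $n$, and that this limit is nonzero. This is where continuous differentiability of $w$ (not mere differentiability) is used, through $w'(\hat p)\topr w'(p)$. Non-degeneracy of $\sigma^2$ may need to be stated as an implicit assumption, as in Theorem \ref{thm5}. The quantile expansion is routine once the folded-normal defining equation is used explicitly.
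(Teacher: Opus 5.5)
Your proof is correct and follows the paper's argument. It uses the same case split, consistency of the statistic and of $q_\alpha^{*(2)}$ (via $\hat\sigma_{w(p)}\topr 0$ and the conditional Markov inequality) in Case 1, and in Case 2 the delta method from Theorem \ref{thm5} combined with $(q_\alpha^{*(2)}-\theta_c)/\hat\sigma_{w(p)}\topr z_\alpha$. Where you differ, you tighten the paper's argument: you obtain the quantile expansion directly from the folded-normal defining equation rather than via the bootstrap delta method borrowed from Theorem \ref{thm2}. You also make explicit two points the paper leaves implicit: the consistency $n\hat\sigma_{w(p)}^2\topr\sigma^2$ (which is exactly where continuity of $w'$ at $p$ is needed) and the non-degeneracy $\sigma^2>0$.
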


\begin{proof}
    
\textit{Case 1: $w(p) |\beta_{TS}| \neq \theta_c$.} Note that $w(\hat p) |\hat \beta_{TS}| \topr w(p) |\beta_{TS}|$ by \eqref{p1b} and the continuous mapping theorem and $q_\alpha^{*(2)} \topr \theta_c$, since $\hat \sigma^2_{w(p)} \topr 0$, so that $w(\hat p) |\hat \beta_{TS}| - q_\alpha^{*(2)} \topr w(p) |\beta_{TS}| - \theta_c$, which immediately implies
\begin{align*}
    \mathbb P\Big( w(\hat p) |\hat \beta_{TS}| - q_\alpha^{*(2)} < 0 \Big) \ton \begin{cases}
        0 \ &, \quad w(p) |\beta_{TS}| - \theta_c > 0 \\
        1 \ &, \quad w(p) |\beta_{TS}| - \theta_c < 0.
\end{cases}
\end{align*}

\textit{Case 2: $w(p) |\beta_{TS}| = \theta_c$.} By the same arguments as given for the proof of Case 2 in Theorem \ref{thm5} we obtain 
\begin{align}
\label{l16}
      \frac{w(\hat p)|\hat \beta_{TS}| - \theta_c}{ \hat \sigma_{w(p)}} \tod \mathcal{N}(0, 1).
\end{align}
By similar arguments as given for the proof of Case 2 in Theorem \ref{thm2} we obtain
\begin{align}
\label{l21}
    \frac{q_\alpha^{*(2)} - \theta_c}{\hat \sigma_{w(p)}}  \topr z_\alpha \qquad \forall \alpha \in (0,1).
\end{align}
Now, the two convergences \eqref{l16} and \eqref{l21} imply the claim.
\end{proof}

\begin{theorem}[Asymptotic validity of test \eqref{FOST}]
\label{thm3}
{ \it The test \eqref{FOST} for the hypotheses \eqref{weighted_hyp} satisfies for all $\alpha \in (0,0.5)$
\begin{align*}
    \mathbb P\big(  ~ \hat I_n^{p}   \cup    \hat I_n^{1-p} \subseteq (-\theta_c,\theta_c) ~ \big) \ton \begin{cases}
            0 \ &, \quad \max \{ p, 1 - p \}  |\beta_{TS}| > \theta_c \\
             \alpha \ &, \quad \max \{ p, 1 - p \} |\beta_{TS}| = \theta_c \\
            1 \ &, \quad \max \{ p, 1 - p \} |\beta_{TS}| < \theta_c.
        \end{cases}
\end{align*}
}
\end{theorem}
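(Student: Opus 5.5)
The plan is to reduce the test \eqref{FOST} to the two one-dimensional weighted TOSTs covered by Theorem \ref{thm5}, with $w(p)=p$ and $w(p)=1-p$; both weights are continuously differentiable everywhere. As in the proof of Theorem \ref{thm5}, the event $\hat I_n^{p}\cup\hat I_n^{1-p}\subseteq(-\theta_c,\theta_c)$ is the intersection of
$$A_n=\big\{\hat p|\hat\beta_{TS}|<\theta_c+z_\alpha\hat\sigma_p\big\}\quad\text{and}\quad B_n=\big\{(1-\hat p)|\hat\beta_{TS}|<\theta_c+z_\alpha\hat\sigma_{1-p}\big\}.$$
Two elementary inequalities then carry most of the argument: $\mathbb P(A_n\cap B_n)\le\min\{\mathbb P(A_n),\mathbb P(B_n)\}$ and $\mathbb P(A_n\cap B_n)\ge \mathbb P(A_n)+\mathbb P(B_n)-1$. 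The proof splits according to the value of $\max\{p,1-p\}|\beta_{TS}|$ relative to $\theta_c$.

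\emph{Interior of the null.} Assume $\max\{p,1-p\}|\beta_{TS}|>\theta_c$, and suppose w.l.o.g.\ that the maximum is attained by $p$. Then $p|\beta_{TS}|>\theta_c$, so Case 1 of Theorem \ref{thm5} gives $\mathbb P(A_n)\to0$, and therefore $\mathbb P(A_n\cap B_n)\to0$.

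\emph{Alternative.} Assume $\max\{p,1-p\}|\beta_{TS}|<\theta_c$. Then both $p|\beta_{TS}|<\theta_c$ and $(1-p)|\beta_{TS}|<\theta_c$, so Theorem \ref{thm5} gives $\mathbb P(A_n)\to1$ and $\mathbb P(B_n)\to1$. The lower bound above then yields $\mathbb P(A_n\cap B_n)\to1$.

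\emph{Boundary with $p\neq1/2$.} Say $p>1/2$ and $p|\beta_{TS}|=\theta_c$. Then $(1-p)|\beta_{TS}|<\theta_c$, so Case 1 of Theorem \ref{thm5} gives $\mathbb P(B_n)\to1$, while Case 2 of Theorem \ref{thm5} gives $\mathbb P(A_n)\to\alpha$. Combining the upper and lower bounds gives $\mathbb P(A_n\cap B_n)\to\alpha$. The case $p<1/2$ is symmetric.

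\emph{Boundary with $p=1/2$, the main obstacle.} Here $|\beta_{TS}|=2\theta_c$ and both constraints are active. By the delta-method step in Theorem \ref{thm5}, applied jointly through \eqref{p1b}, the standardized statistics
$$Z_{1,n}=\frac{\hat p|\hat\beta_{TS}|-\theta_c}{\hat\sigma_p},\qquad Z_{2,n}=\frac{(1-\hat p)|\hat\beta_{TS}|-\theta_c}{\hat\sigma_{1-p}}$$
converge jointly to a bivariate normal vector $(Z_1,Z_2)$ with standard normal margins. Their numerators are asymptotically $\tfrac12\,\sgn(\beta_{TS})(\hat\beta_{TS}-\beta_{TS})\pm|\beta_{TS}|(\hat p-p)$. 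The limit of $\mathbb P(A_n\cap B_n)$ is therefore $\mathbb P(Z_1<z_\alpha,Z_2<z_\alpha)$. This equals $\alpha$ exactly when $\mathrm{corr}(Z_1,Z_2)=1$, which is the case when the $\hat p$-fluctuation is asymptotically negligible relative to that of $\hat\beta_{TS}$, i.e.\ when the $p$-row of $V$ vanishes.

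My first attempt at this case is to verify that degeneracy (or to impose it) and conclude $Z_{1,n}-Z_{2,n}\topr0$, whence $\mathbb P(A_n\cap B_n)\to\alpha$. I expect this to be the delicate point. For a non-degenerate $V$ (for instance $\hat p_{mle}$ with $\mathrm{Cov}(\hat\beta_{TS},\hat p)=0$), the correlation of $Z_1$ and $Z_2$ is strictly less than one, and the argument only yields $\mathbb P(Z_1<z_\alpha,Z_2<z_\alpha)<\alpha$. I would flag this explicitly rather than force the equality.
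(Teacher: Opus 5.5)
Except for the boundary case with $p=1/2$, your argument is correct and is essentially the paper's. The paper writes the rejection event as $Z_n<\theta_c$, with $Z_n$ the maximum of the two shifted statistics, and uses $Z_n\topr\max\{p,1-p\}|\beta_{TS}|$ for the limits $0$ and $1$. You instead apply Theorem~\ref{thm5} to each constraint and combine through $\mathbb P(A_n)+\mathbb P(B_n)-1\le\mathbb P(A_n\cap B_n)\le\min\{\mathbb P(A_n),\mathbb P(B_n)\}$. On the boundary with $p\neq1/2$, both routes reduce to the single active weighted TOST.

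At $p=1/2$ your suspicion is right, and it is the paper's proof that breaks, not yours. The paper asserts $X_n-Y_n\topr0$ ``since $p=1/2$'' (its $X_n,Y_n$ are your $Z_{1,n},Z_{2,n}$). From this it infers $\mathbb P(A_n\setminus B_n)\to0$ and hence the limit $\alpha$. But $\sqrt n$ times the difference of the numerators is $2|\hat\beta_{TS}|\sqrt n(\hat p-\tfrac12)\tod\mathcal N(0,4\beta_{TS}^2V_{22})$. Moreover, $n\hat\sigma_p^2$ and $n\hat\sigma_{1-p}^2$ converge to $V_{11}/4\pm\beta_{TS}V_{12}+\beta_{TS}^2V_{22}$.

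The standardized pair therefore converges to standard normals $(Z_1,Z_2)$. Their correlation is one only if $\tfrac12\sgn(\beta_{TS})U\pm|\beta_{TS}|W$, with $(U,W)^\top\sim\mathcal N_2(0,V)$, are almost surely positive multiples of each other. That forces a linear relation between $U$ and $W$, which is impossible for nonsingular $V$. By the portmanteau theorem the rejection probability then tends to $\mathbb P(Z_1<z_\alpha,Z_2<z_\alpha)<\alpha$.

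This is exactly what happens for the paper's own estimator $\hat p_{mle}$ with binomial $n_1$. There $V_{22}=p(1-p)=1/4$, and $V_{12}=0$ because the partial-likelihood influence function is a martingale integral with conditional mean zero given $X_i$. The correlation is then $(V_{11}-\beta_{TS}^2)/(V_{11}+\beta_{TS}^2)<1$.

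So abandon the plan of verifying the degeneracy, since it fails for $\hat p_{mle}$, and do not impose it silently either. The correct statement at $p=1/2$ is $\lim_{n}\mathbb P(\hat I_n^{p}\cup\hat I_n^{1-p}\subseteq(-\theta_c,\theta_c))=\mathbb P(Z_1<z_\alpha,Z_2<z_\alpha)\le\alpha$, with strict inequality for nonsingular $V$. The test keeps asymptotic level $\alpha$ but is conservative there, just as Theorem~\ref{thm4}(ii) records for Algorithm~\ref{alg3}. The value $\alpha$ is attained only under an extra condition such as $\sqrt n(\hat p-p)\topr0$ (e.g.\ $p$ known).

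The TOST entries at $p=0.5$, $n=5000$ in Table~\ref{tab2} ($0.042$, $0.043$, $0.045$, against $0.048$, $0.051$, $0.047$ for $p=0.25$) are consistent with this. With the $p=1/2$ case stated in that form, your proposal is a complete proof of the corrected theorem.
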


\begin{proof}
    First, note that the decision rule \eqref{FOST} is equivalent to the condition
    \begin{align*}
    Z_n :=\max \{ |\hat p \hat \beta_{TS}| -  z_\alpha \hat \sigma_{p}, \ |(1-\hat p) \hat \beta_{TS}| - z_\alpha \hat \sigma_{1-p} \} < \theta_c.
    \end{align*}

\textit{Case 1: $\max \{ p, 1 - p \} |\beta_{TS}| \neq \theta_c$.} Since $\hat p \topr p$, $\hat \beta_{TS} \topr \beta_{TS}$ and $\hat \sigma_{p}, \ \hat \sigma_{1-p} \topr 0 $, the continuous mapping theorem yields 
\begin{align}
\label{l7}
    Z_n \topr \max \{ p, 1 - p \} |\beta_{TS}|.
\end{align}
Now, the convergence \eqref{l7} immediately implies
\begin{align*}
    \mathbb P(  Z_n < \theta_c ) \ton \begin{cases}
            0 \ &, \quad \max \{ p, 1 - p \} |\beta_{TS}| > \theta_c \\
            1 \ &, \quad \max \{ p, 1 - p \} |\beta_{TS}| < \theta_c.
    \end{cases}
\end{align*}

\textit{Case 2: $\max \{ p, 1 - p \} |\beta_{TS}| = \theta_c$.} We distinguish the two cases $p \neq 1/2$ and $p = 1/2$. First, let $p \neq 1/2$ and further assume $p<1/2$, the case $p>1/2$ follows by analogous arguments. Since $p<1/2$, we have $ |p\beta_{TS}| < \theta_c$ and so
\begin{align*}
    \mathbb P\Big( |\hat p \hat \beta_{TS}| -  z_\alpha \hat \sigma_{p} < \theta_c \Big) \ton 1,
\end{align*}
which implies
\begin{align*}
   \limsup_{n \to \infty} \ \mathbb P(  Z_n < \theta_c ) = \limsup_{n \to \infty} \  \mathbb P\Big( |(1-\hat p) \hat \beta_{TS}| - z_\alpha \hat \sigma_{1-p} < \theta_c \Big).
\end{align*}
Since $|(1-p) \beta_{TS}| = \theta_c > 0$ by assumption, we have $\beta_{TS} \neq 0$ and so the delta method (Theorem 3.1, \citeSM{vanderVaart1998}) applied to the function $f$ in the proof of Theorem \ref{thm5} with $w(y) = 1-y$ yields (note $|(1-p)\beta_{TS}| = \theta_c$) 
\begin{align*}
     \frac{ |(1-\hat p) \hat \beta_{TS}| - \theta_c }{\hat \sigma_{1-p}} \tod \mathcal{N}(0, 1),
\end{align*}
which immediately ensures
\begin{align*}
    \limsup_{n \to \infty} \  \mathbb P\Big(   |(1- \hat p) \hat \beta_{TS}| - z_\alpha \hat \sigma_{1-p} < \theta_c \Big) = \alpha.
\end{align*}
Next, let $p=1/2$. Define the sequences of sets
\begin{align*}
    A_n := \Big\{
\underbrace{\frac{|\hat p \hat \beta_{TS}| - \theta_c}{\hat \sigma_{p}}   }_{=: X_n} < z_\alpha \Big\} \quad , \quad B_n = \Big\{ \underbrace{ \frac{ |(1-\hat p) \hat \beta_{TS}| - \theta_c}{\hat \sigma_{1-p}}    }_{=: Y_n} < z_\alpha \Big\}
\end{align*}
and note that $\mathbb P(A_n) \ton \alpha$, since $X_n \tod \mathcal{N}(0,1)$ by the delta method. Therefore, to obtain $\mathbb P(A_n \cap B_n) \ton \alpha$, it suffices to show that
\begin{align*}
   (\star) \quad \mathbb P(A_n \backslash B_n) \ton 0.
\end{align*}
To prove $(\star)$, note that $X_n - Y_n \topr 0$, since $p = 1/2$. Now, let $\epsilon > 0$ and observe
\begin{align*}
  \mathbb P( X_n < z_\alpha , \ Y_n \geq z_\alpha ) &= \mathbb P( X_n < z_\alpha , \ Y_n \geq z_\alpha , \ |Y_n - z_\alpha| > \epsilon ) + \mathbb P( X_n < z_\alpha , \ Y_n \geq z_\alpha , \ |Y_n - z_\alpha| \leq \epsilon ) \\
  &\leq \mathbb P( | X_n - Y_n| > \epsilon) + \mathbb P(| Y_n - z_\alpha| \leq \epsilon),
\end{align*}
where the first summand in the upper bound converges to zero, since $X_n - Y_n \topr 0$ and the second summand converges to  $\mathbb P(|Z-z_\alpha| < \epsilon), \quad Z \sim \mathcal{N}(0,1)$ which can be made arbitrarily small by letting $\epsilon \downarrow 0$, since the distribution function of the standard normal is continuous at $z_\alpha$. Therefore, $(\star)$ must be true.
\end{proof}

In Theorem \ref{thm4} below we prove the asymptotic validity of the test \eqref{bootstrap3} for a general class of weight functions which includes the specific choice $w(p)=\max\{p,1-p\}$. The corresponding more general version of Algorithm \ref{alg3} that Theorem \ref{thm4} relates to is obtained by replacing all $\max(\cdot, 1-\cdot)$ by $w(\cdot)$. \smallskip

\begin{theorem}[Asymptotic validity of test \eqref{bootstrap3}]
\label{thm4}
{ \it
Let $w: \R \to (0,1)$ be Hadamard directionally differentiable with a sub-additive derivative and let $\mathcal{E} \subseteq \mathbb R$ denote the set of differentiability points, then for all $\alpha \in (0,1)$ the test \eqref{bootstrap3} for the weighted hypotheses \eqref{weighted_hyp1} satisfies
\begin{align*}
&(i) \ \lim_{n \to \infty} \ \mathbb P\Big( w(\hat p) |\hat \beta_{TS}| < q_\alpha^{*(3)} \Big) = \begin{cases}
             0 \ &, \quad w(p) \ |\beta_{TS}| > \theta_c \\
            1 \ &, \quad w(p) \ |\beta_{TS}| < \theta_c.
        \end{cases} \\[8pt]
 &(ii) \ \limsup_{n \to \infty} \ \mathbb P\Big( w(\hat p) |\hat \beta_{TS}| < q_\alpha^{*(3)} \Big)  \begin{cases}
            \leq \alpha \ &, \quad p \notin \mathcal{E} \\
            = \alpha \ &, \quad p \in \mathcal{E}.
        \end{cases}, \quad \text{if} \quad  w(p) |\beta_{TS}| = \theta_c.
\end{align*}
}
\end{theorem}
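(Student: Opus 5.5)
We follow the scheme of Theorems \ref{thm2} and \ref{thm6}. The main new tool is the directional (Hadamard) delta method together with its bootstrap analogue for directionally differentiable maps \citepSM[e.g.\ Fang and Santos, 2019]{vanderVaart1998}.

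Throughout, write $f(x,y)=|x|\,w(y)$ and $\mathcal{D}_n$ for the data. Let $(\hat\beta^*_{TS},\hat p^*)^\top\sim\mathcal N_2((\theta_c/w(\hat p),\hat p)^\top,\hat\Sigma)$ be the bootstrap vector of Algorithm \ref{alg3}, with $w$ in place of $\max\{\cdot,1-\cdot\}$. By \eqref{p1b}, $\hat\Sigma=\hat V/n\topr 0$, $\hat V\topr V$, $\hat p\topr p$ and $\hat\beta_{TS}\topr\beta_{TS}$.

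\textbf{Part (i).} Since $w$ is Hadamard directionally differentiable, it is continuous. Hence $w(\hat p)|\hat\beta_{TS}|\topr w(p)|\beta_{TS}|$ by the continuous mapping theorem. Next I would show $q^{*(3)}_\alpha\topr\theta_c$. Conditionally on $\mathcal{D}_n$, the conditional Markov inequality as in the proof of Theorem \ref{thm2} gives $\hat p^*-\hat p\to0$ and $\hat\beta^*_{TS}-\theta_c/w(\hat p)\to0$ in conditional probability. Since $w$ is continuous and positive at $p$, continuity of $f$ then gives $f(\hat\beta^*_{TS},\hat p^*)-\theta_c\to0$ in conditional probability. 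This forces every conditional quantile to converge to $\theta_c$. Lemma 2.22 in \citeSM{Koletzko2024} then yields the limits $0$ and $1$, exactly as in Case 1 of Theorem \ref{thm2}.

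\textbf{Part (ii), the data statistic.} Let $w(p)|\beta_{TS}|=\theta_c$, so $\beta_{TS}\neq0$. Then $f$ is Hadamard directionally differentiable at $(\beta_{TS},p)$ with
$$f'(a,b)=\sgn(\beta_{TS})\,w(p)\,a+|\beta_{TS}|\,w'_p(b).$$
The directional delta method gives $\sqrt n(w(\hat p)|\hat\beta_{TS}|-\theta_c)\tod f'(G)$ with $G\sim\mathcal N_2(0,V)$.

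\textbf{Part (ii), the bootstrap statistic.} Here the key observation is that the bootstrap centres at $(\theta_c/w(\hat p),\hat p)$ rather than at the true parameter. Write $\sqrt n(\hat\beta^*_{TS}-\theta_c/w(\hat p),\hat p^*-\hat p)=\hat V^{1/2}Z$ with $Z\sim\mathcal N_2(0,I_2)$. Then
$$\sqrt n\big(f(\hat\beta^*_{TS},\hat p^*)-\theta_c\big)=\sqrt n\big(f((\theta_c/w(\hat p),\hat p)+\hat V^{1/2}Z/\sqrt n)-f(\theta_c/w(\hat p),\hat p)\big),$$
because $f(\theta_c/w(\hat p),\hat p)=\theta_c$ exactly. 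The base point now moves with $n$, and this is the main obstacle. Directional differentiability only controls increments around the fixed point $(\beta_{TS},p)$, and at $p\notin\mathcal E$ the derivative of $w$ at the moving point $\hat p$ need not approach $w'_p$. The sub-additivity of $w'_p$ is what I would use here. I would first try to establish, conditionally in probability,
$$\sqrt n\big(w(\hat p+h/\sqrt n)-w(\hat p)\big)\le w'_p\big(h+\sqrt n(\hat p-p)\big)-w'_p\big(\sqrt n(\hat p-p)\big)+o_{\mathbb P}(1)\le w'_p(h)+o_{\mathbb P}(1),$$
by expanding both terms around $p$ via Hadamard differentiability (uniformly over tight $h$, using tightness of $\sqrt n(\hat p-p)$) and then applying sub-additivity. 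The $|\cdot|$ factor is differentiable at $\theta_c/w(\hat p)\to|\beta_{TS}|>0$ and poses no difficulty.

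This would show that the bootstrap statistic is asymptotically stochastically bounded above by $f'(\hat V^{1/2}Z)\stackrel{d}{=}f'(G)$. Consequently $\liminf\sqrt n(q^{*(3)}_\alpha-\theta_c)$ is at most the $\alpha$-quantile $c_\alpha$ of $f'(G)$, in probability. Hmm—this direction would make the test anti-conservative, so the inequality must be arranged the other way. I would therefore redo the comparison in the opposite direction, obtaining a lower bound $w'_p(h)\ge w'_p(h+u)-w'_p(u)$, which is again sub-additivity. This gives the bootstrap statistic asymptotically stochastically \emph{larger} than $f'(G)$, and hence $\sqrt n(q^{*(3)}_\alpha-\theta_c)\ge c_\alpha+o_{\mathbb P}(1)$. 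For the prototype $w=\max\{\cdot,1-\cdot\}$ one checks this directly: the bootstrap derivative at $\hat p$ is a linear map $\pm b$, which dominates the quantiles of $|b|$ in the appropriate sense.

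\textbf{Conclusion.} Combining the two parts gives
$$\limsup_n\mathbb P\big(\sqrt n(w(\hat p)|\hat\beta_{TS}|-\theta_c)<\sqrt n(q^{*(3)}_\alpha-\theta_c)\big)\le\mathbb P(f'(G)\le c_\alpha)=\alpha,$$
where the last equality uses continuity of the law of $f'(G)$ at $c_\alpha$. If $p\in\mathcal E$, then $w'_p$ is linear, both steps hold with equality, and the bootstrap delta method (Theorem 23.5, \citeSM{vanderVaart1998}) gives $\sqrt n(q^{*(3)}_\alpha-\theta_c)\topr c_\alpha$. In that case the limit equals $\alpha$, exactly as in \eqref{l16}--\eqref{l21}.
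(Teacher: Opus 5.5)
Your part (i), the directional delta method for $w(\hat p)|\hat\beta_{TS}|$, and the case $p\in\mathcal E$ all match the paper. The gap is in the bootstrap comparison for $p\notin\mathcal E$, where you turn a correct argument into an incorrect one.

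\textbf{Where the argument goes wrong.} The test rejects for \emph{small} values of $w(\hat p)|\hat\beta_{TS}|$. A bootstrap quantile that is asymptotically \emph{below} $\theta_c+c_\alpha/\sqrt n$ therefore makes the test conservative, not anti-conservative. Your first bound, $w'_p(u+h)-w'_p(u)\le w'_p(h)$ with $u=\sqrt n(\hat p-p)$, is exactly what sub-additivity gives and exactly what is needed.

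The ``opposite'' inequality $w'_p(h)\ge w'_p(h+u)-w'_p(u)$ is the same inequality rewritten. It still bounds the bootstrap increment from above, so it cannot make the bootstrap statistic stochastically larger than $f'(G)$. No such lower bound holds in general: for $w=\max\{\cdot,1-\cdot\}$ at $p=1/2$ one has $w'_p=|\cdot|$, and $|u+h|-|u|=-|h|$ whenever $|h|\le|u|$ and $uh\le 0$.

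Your prototype check is reversed for the same reason. Since $\pm b\le|b|$, the linear bootstrap map has the \emph{smaller} quantiles.

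Finally, your conclusion does not follow from the claim just before it. The bound $\sqrt n(q^{*(3)}_\alpha-\theta_c)\ge c_\alpha+o_{\mathbb P}(1)$ would give $\liminf_n\mathbb P(\text{reject})\ge\alpha$, not $\limsup_n\mathbb P(\text{reject})\le\alpha$.

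\textbf{The repair.} Keep your first inequality. With $(a,b)=\hat V^{1/2}Z$ it gives, conditionally in probability,
$\sqrt n\bigl(w(\hat p^*)|\hat\beta^*_{TS}|-\theta_c\bigr)\le w(p)\,a+|\beta_{TS}|\,w'_p(b)+o_{\mathbb P}(1)$.
Hence $\sqrt n(q^{*(3)}_\alpha-\theta_c)\le c_\alpha+\varepsilon$ with probability tending to one for each $\varepsilon>0$; for nonsingular $V$ the law of $f'(G)$ has a positive density, which is what makes this step work. Therefore $\limsup_n\mathbb P(\text{reject})\le\mathbb P(f'(G)<c_\alpha+\varepsilon)$, and this tends to $\alpha$ as $\varepsilon\downarrow 0$.

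Once corrected, your route is a self-contained version of the step that the paper's Case 2 only delegates to Theorem 2.1(a) of Bastian et al.\ (2024). You expand both $w(\hat p+b/\sqrt n)$ and $w(\hat p)$ around the fixed point $p$ with the tight local parameter $u$, then invoke sub-additivity. This shows transparently why sub-additivity yields conservativeness rather than exactness.

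\textbf{A minor point, which the paper's Case 1 also passes over.} The bootstrap is centred at the positive value $\theta_c/w(\hat p)$, so its linear part is $+w(p)\,a$ rather than $\sgn(\beta_{TS})\,w(p)\,a$. Your identity in law with $f'(G)$, and likewise the paper's normalisation by $\hat\sigma_{w(p)}$, therefore needs an extra condition such as $\beta_{TS}>0$, $V_{12}=0$, or an even $w'_p$.
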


\begin{proof}

\textit{Proof of (i).} Assume $w(p) |\beta_{TS}| \neq \theta_c$. Note that \textit{(i.i)} $w(\hat p) |\hat \beta_{TS}| \topr w(p) |\beta_{TS}|$ and \textit{(i.ii)} $q_\alpha^{*(3)} \topr \theta_c$ so that $w(\hat p) |\hat \beta_{TS}| - q_\alpha^{*(3)} \topr w(p) |\beta_{TS}| - \theta_c$, which immediately implies
\begin{align*}
    \mathbb P\Big( w(\hat p) |\hat \beta_{TS}| - q_\alpha^{*(3)} < 0 \Big) \ton \begin{cases}
        0 \ &, \quad w(p) |\beta_{TS}| - \theta_c > 0 \\
        1 \ &, \quad w(p) |\beta_{TS}| - \theta_c < 0.
\end{cases}
\end{align*}

\textit{Proof of (i.i).} The convergence follows from the continuous mapping theorem, since $\hat p$ and $\hat \beta_{TS}$ are consistent and the map $w$ is continuous at $p$ as a consequence of being Hadamard directionally differentiable at $p$. \medskip

\textit{Proof of (i.ii).} Note that $\hat \Sigma := \hat V/n \topr 0$, since $\hat V \topr V$ by \eqref{p1b}. This implies 
\begin{align}
\label{l12}
        \big( \hat \beta_{TS}^*, \hat p^* \big)^\top - \big( \theta_c / w(\hat p), \hat p \big)^\top \topr 0 \quad \text{conditionally in probability}
\end{align}
by the conditional Markov inequality and definition of $(\hat \beta_{TS}^*, \hat p^*)^\top$. Because the function $h$ in Lemma \ref{lem1} below is continuous on $\R^2$, the convergence \eqref{l12} then implies (note $h( \theta_c/w(\hat p) , \hat p  ) = \theta_c$)
\begin{align*}
    w(\hat p^*) |\hat \beta_{TS}^*| - \theta_c \topr 0 \quad \text{conditionally in probability} 
\end{align*}
which finally implies for the corresponding conditional $\alpha$-quantile sequence 
\begin{align*}
    q_\alpha^{*(3)} - \theta_c \topr 0 \quad \forall \alpha \in (0,1).
\end{align*}

\textit{Proof of (ii).} Assume $w(p) |\beta_{TS}| = \theta_c$. First, note that $\beta_{TS} \neq 0$ in this case, since $\theta_c > 0$. Next, we establish a technical Lemma in order to derive the asymptotic error distribution of $w(\hat p) |\hat \beta_{TS}|$ via the delta method.

\begin{lemma}
\label{lem1}
{ \it
If $w : \R \to (0,1)$ is Hadamard directionally differentiable and $\mathcal{E}$ denotes the set of differentiability points, then the map
\begin{align*}
    h: \mathbb R^2 \to \mathbb R, \quad h(x,y) = |x| w(y)
\end{align*}
is Hadamard directionally differentiable at any $(x,y)^\top \in \mathbb R^2, \ x \neq 0$ with derivative map
\begin{align*}
    h_{(x,y)}^{'}(u,v) = \begin{cases}
        (\sgn(x)w(y), |x| w^{'}(y) ) \cdot (u,v)^\top &, \quad y \in \mathcal{E} \\
       w(y) \sgn(x)u + w_y^{'}(v)|x|   &, \quad y \notin \mathcal{E}.
    \end{cases}
\end{align*}
}
\end{lemma}

\begin{proof}
    For $y \in \mathcal{E}$ the map $h$ is differentiable at $(x,y)^\top , x \neq 0$ and so the conclusion follows from computing the partial derivatives. For $y \notin \mathcal{E}$, note that the maps $f(x,y) = w(y)$ and $g(x,y)=|x|$ are Hadamard directionally differentiable at $(x,y)^\top$ with derivatives $f_{(x,y)}^{'}(u,v)= w_y^{'}(v)$ and $g_{(x,y)}^{'}(u,v) = \sgn(x)u$, hence by the product rule (see \citeSM{shapiro1990concepts}) $h$ is Hadamard directionally differentiable at $(x,y)^\top$ with derivative
    \begin{align*}
        h_{(x,y)}^{'}(u,v) &= f(x,y) \cdot g_{(x,y)}^{'}(u,v) + f_{(x,y)}^{'}(u,v) \cdot g(x,y)  \\
        &= w(y) \sgn(x)u + w_y^{'}(v)|x|.
    \end{align*}
\end{proof}

With the help of Lemma \ref{lem1} we can apply the delta method for Hadamard directionally differentiable functions (Proposition 2.1, \citeSM{carcamo2020directional}) to the convergence \eqref{p1b} and obtain (note $w(p) |\beta_{TS}| = \theta_c$ by assumption)
\begin{align}
\label{l9}
    \sqrt{n} \Big( w(\hat p) |\hat \beta_{TS}| -\theta_c \Big) \tod \begin{cases}
        \mathcal{N}(0, v_{w(p)}^2)  &, \quad p \in \mathcal{E} \\
        w(p) \sgn(\beta_{TS}) Z_1 + w_p^{'}(Z_2)|\beta_{TS}| &, \quad p \notin \mathcal{E},
    \end{cases}
\end{align}
where $(Z_1, Z_2)^\top \sim \mathcal{N}_2(0, V)$ and $v_{w(p)}^2 := (w(p), w^{'}(p) \beta_{TS}) \ V \ (w(p), w^{'}(p) \beta_{TS})^\top.$ We now distinguish the two cases $p \in \mathcal{E}$ and $p \notin \mathcal{E}$. \medskip

\textit{Case 1: $p \in \mathcal{E}$.}
By definition of $(\hat \beta_{TS}^*, \hat p^*)^\top$, we have (conditional on the data)
\begin{align}
\label{l8}
        (\hat \beta_{TS}^*, \hat p^*)^\top \sim \mathcal{N}_2\Big( \, \big( \theta_c/w(\hat p), \hat p \big)^\top, \ \hat \Sigma \, \Big)
\end{align}
and since the map $h$ in Lemma \ref{lem1} is differentiable at any $(x,p)^\top$ for $p \in \mathcal{E}$ and $x \neq 0$, the bootstrap delta method (Theorem 23.5, \citeSM{vanderVaart1998}) applied to \eqref{l8} together with a bootstrap version of Slutsky's Lemma yields (note $h( \theta_c/w(\hat p) , \hat p  ) = \theta_c$)
\begin{align*}
   \frac{w(\hat p^*) |\hat \beta_{TS}^*| -\theta_c}{\hat \sigma_{w(p)}} \tod \mathcal{N}(0, 1),
\end{align*}
where $\hat \sigma_{w(p)}^2$ is defined in \eqref{varestim}.  Therefore, the corresponding conditional $\alpha$-quantile sequence satisfies
\begin{align}
\label{l10}
    \frac{q_\alpha^{*(3)} -\theta_c }{\hat \sigma_{w(p)}}  \topr z_\alpha \quad \forall \alpha \in (0,1).
\end{align}
Now, the two convergences \eqref{l9} and \eqref{l10} together imply
\begin{align*}
  \mathbb P\Big( w(\hat p) |\hat \beta_{TS}| < q_\alpha^{*(3)} \Big) \ton \alpha \quad \forall \alpha \in (0,1).
\end{align*}

\textit{Case 2: $p \notin \mathcal{E}$.} Note that by Lemma \ref{lem1} the function $h$ is not differentiable at $(\beta_{TS},p)^\top$ in this case and so the bootstrap delta method fails (see \citeSM{fang2019inference} for further details). However, since the Hadamard directional derivative $w_p^{'}(\cdot)$ is sub-additive by assumption, the Hadamard directional derivative $h_{(\beta_{TS},p)}^{'}(\cdot)$ also remains sub-additive, so that similar arguments as given for the proof of Theorem 2.1 (a) in \citeSM{BASTIAN2024} yield 
\begin{align*}
    \limsup_{n \to \infty} \ \mathbb P\Big( w(\hat p) |\hat \beta_{TS}| < q_\alpha^{*(3)} \Big) \leq \alpha.
\end{align*}
\end{proof}

\subsection{Estimating the subgroup prevalence} \label{sec83}

\begin{proposition}
\label{prop0}
    {\it The  trial subgroup proportion $\hat p_{mle}$ satisfies \eqref{p1b}. Moreover, 
    the asymptotic covariance matrix $V$  can be consistently estimated by
\begin{align}
\label{covest}
    \hat V = \frac{1}{n} \sum_{i=1}^{n} \hat \psi_i \hat \psi_i^\top,
\end{align}
where
\begin{align*}
    \hat \psi_i = \begin{pmatrix}
        e_3^\top \, ( \frac{1}{n}  \, \mathcal{I}(\hat \beta)  )^{-1} \,  \delta_i\bigl(\tilde X_i-\bar X(\hat \beta,T_{\mathrm{obs},i})\bigr) \\
        X_{iS}- \hat p_{mle}
    \end{pmatrix},
\end{align*}
$\mathcal{I}(\hat \beta)$ is  the plug-in estimator of the observed information matrix defined in \eqref{fisher_est} and  $\bar X(\beta, T_{{\rm obs},i})$ denotes the \textit{risk-set weighted mean}  defined by
\begin{align}
\label{riskmean} 
\bar X(\beta, T_{{\rm obs},i})
:= \frac{\sum_{j\in R(T_{{\rm obs},i})} \tilde X_j \exp(\beta^\top \tilde X_j)}{\sum_{j\in R(T_{{\rm obs},i})} \exp(\beta^\top \tilde X_j)}
\end{align}
(note that the  element $\hat V_{1,1}$  of the matrix $\hat V$  is given by $ e_3^\top ( \frac{1}{n}  \, \mathcal{I}(\hat \beta)  )^{-1} e_3 $ ).} 
\end{proposition}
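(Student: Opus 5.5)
The plan is to write both components of $(\hat\beta_{TS},\hat p_{mle})^\top$ as sums of i.i.d.\ (or asymptotically i.i.d.) influence terms and then apply a multivariate central limit theorem.

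For the second component this is exact: $\sqrt n(\hat p_{mle}-p)=n^{-1/2}\sum_{i=1}^n(X_{iS}-p)$.

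For the first component, I would use the standard Taylor expansion of the score $U(\beta)=\partial\ell(\beta)/\partial\beta$ around the true $\beta$. This gives
\begin{align*}
\sqrt n(\hat\beta-\beta)=\Big(\tfrac1n\mathcal I(\beta)\Big)^{-1}\tfrac{1}{\sqrt n}U(\beta)+o_{\mathbb P}(1),
\end{align*}
with $U(\beta)=\sum_i\delta_i(\tilde X_i-\bar X(\beta,T_{\mathrm{obs},i}))$, and $\frac1n\mathcal I(\beta)\topr\mathcal I_0(\beta)$ as recalled before \eqref{p1a}. Following Lin and Wei (1989), I would then replace the empirical risk-set mean $\bar X$ by its population limit $e(\beta,t)=s^{(1)}(\beta,t)/s^{(0)}(\beta,t)$. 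The score becomes $\sum_i\psi_{1i}+o_{\mathbb P}(\sqrt n)$, where the terms
\begin{align*}
\psi_{1i}=\int_0^\tau(\tilde X_i-e(\beta,t))\,dM_i(t)
\end{align*}
are i.i.d.\ and mean zero. Here $M_i$ is the counting-process martingale of subject $i$, and the compensator-correction term vanishes because each $\psi_{1i}$ is itself a martingale integral. I will treat this decomposition, together with the regularity conditions of \citeSM{andersen1982}, as the main technical input.

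Stacking the two components, $\sqrt n((\hat\beta_{TS},\hat p)-(\beta_{TS},p))=n^{-1/2}\sum_i\psi_i+o_{\mathbb P}(1)$ with
\begin{align*}
\psi_i=\big(e_3^\top\mathcal I_0^{-1}\psi_{1i},\;X_{iS}-p\big)^\top .
\end{align*}
These vectors are i.i.d.\ with finite second moments, since the covariates are binary and the counting processes are bounded on $[0,\tau]$. The multivariate CLT then yields \eqref{p1b} with $V=\mathbb E[\psi_1\psi_1^\top]$.

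For consistency of $\hat V$ in \eqref{covest}, I would show that $\frac1n\sum_i\|\hat\psi_i-\tilde\psi_i\|^2\topr0$, where $\tilde\psi_i$ is the natural finite-sample proxy for $\psi_i$. This uses three facts: $\hat\beta\topr\beta$; $\frac1n\mathcal I(\hat\beta)\topr\mathcal I_0$; and uniform convergence $\sup_t\|\bar X(\hat\beta,t)-e(\beta,t)\|\topr0$ on $[0,\tau]$, which holds by Andersen--Gill because $\mathbb P(T_{\rm obs}\ge\tau)>0$. A Cauchy--Schwarz argument then transfers this to the outer-product average, and the law of large numbers gives $\frac1n\sum\tilde\psi_i\tilde\psi_i^\top\topr V$.

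\textbf{Main obstacle.} As written, $\hat\psi_i$ contains only $\delta_i(\tilde X_i-\bar X)$ and not the full martingale residual $\int(\tilde X_i-\bar X)\,d\hat M_i$. So the $(1,1)$ entry comes out as
\begin{align*}
e_3^\top\big(\tfrac1n\mathcal I\big)^{-1}\Big[\tfrac1n\sum_i\delta_i(\tilde X_i-\bar X_i)(\tilde X_i-\bar X_i)^\top\Big]\big(\tfrac1n\mathcal I\big)^{-1}e_3 ,
\end{align*}
and the paper's remark identifies this with $e_3^\top(\frac1n\mathcal I)^{-1}e_3$. To justify that identification, I would show the bracketed average is a consistent estimator of $\mathcal I_0$, since both converge to $\int v(\beta,t)s^{(0)}(\beta,t)h_0(t)\,dt$ (this is the optional variation of the score martingale). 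The cross term $\mathbb E[\psi_{1}(X_{S}-p)]$ needs the same care: it is $\mathbb E\!\int(\tilde X-e)(X_S-p)\,dM$, which I would compute via the compensator and match to its empirical analogue. Establishing that the stated simplified $\hat\psi_i$ still targets the correct cross-covariance is, I expect, the delicate step. If it fails, I would fall back to the full Lin--Wei residuals $\int(\tilde X_i-\bar X(\hat\beta,t))\,d\hat M_i(t)$.
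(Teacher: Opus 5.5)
Your derivation of \eqref{p1b} through the Lin--Wei representation $\int_0^\tau(\tilde X_i-e(\beta,t))\,dM_i(t)$ is sound. It is more careful than the paper, which inserts $\delta_i(\tilde X_i-\bar X(\beta,T_{\mathrm{obs},i}))$ into $\psi_i$ as if it were an i.i.d.\ influence term and then asserts consistency of $\hat V$ by plug-in.

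The gap is in the second assertion. You never show that the \emph{stated} $\hat V$ is consistent. Your fallback to residuals $\int(\tilde X_i-\bar X(\hat\beta,t))\,d\hat M_i(t)$ would prove consistency of a different estimator, not this one.

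Your general device $\frac1n\sum_i\|\hat\psi_i-\tilde\psi_i\|^2\topr 0$ also cannot work when $\tilde\psi_i$ is a proxy for your $\psi_i$. Even at the true $\beta$, with $\bar X$ replaced by $e$,
\[
\delta_i(\tilde X_i-e(\beta,T_{\mathrm{obs},i}))=\int_0^\tau(\tilde X_i-e)\,dM_i+\int_0^\tau(\tilde X_i-e(\beta,t))Y_i(t)e^{\beta^\top\tilde X_i}h_0(t)\,dt,
\]
and the second term is a nondegenerate random vector. So the limit of $\hat V$ has to be matched to $V$ entry by entry. For the $(1,1)$ entry, your optional-versus-predictable-variation argument does this. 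Your reading of the remark on $\hat V_{1,1}$ as an asymptotic rather than exact identity is correct. For the $(1,2)$ entry, however, you stop.

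The missing step is short.
\begin{itemize}
\item The target is $V_{12}=e_3^\top\mathcal I_0^{-1}\,\mathbb E\big[(X_S-p)\int_0^\tau(\tilde X-e)\,dM\big]=0$. This holds because $X_S$ is $\mathcal F_0$-measurable and the martingale integral has conditional mean zero given the covariates. No compensator computation is needed.
\item For the estimator, $U(\hat\beta)=0$ removes the $\hat p$ term. Uniform convergence of $\bar X(\hat\beta,\cdot)$ and the law of large numbers then give
\[
\frac1n\sum_{i=1}^n\delta_i\big(\tilde X_i-\bar X(\hat\beta,T_{\mathrm{obs},i})\big)X_{iS}\topr\int_0^\tau\Big(s^{(2)}(t)-\frac{s^{(1)}(t)s^{(1)}(t)^\top}{s^{(0)}(t)}\Big)e_2\,h_0(t)\,dt=\mathcal I_0e_2,
\]
where $s^{(2)}(t)=\mathbb E[\tilde X\tilde X^\top Y(t)e^{\beta^\top\tilde X}]$. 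The form of this limit uses that $X_{iS}$ is the second coordinate of $\tilde X_i$.
\item Hence $\hat V_{12}\topr e_3^\top\mathcal I_0^{-1}\mathcal I_0e_2=e_3^\top e_2=0=V_{12}$.
\end{itemize}

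This agreement is a structural coincidence, not a property of the simplified residuals. The unprojected limit $\mathcal I_0e_2$ is nonzero, and the same recipe applied to $(\hat\beta_S,\hat p)$ would converge to $e_2^\top e_2=1$ instead of the true value $0$. With this computation the proof closes; without it the consistency claim is not established.
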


\begin{proof}
It follows from the asymptotic normality in \eqref{p1a} and its proof in \citeSM{andersen1982} that
\begin{align}
\label{res}
    \sqrt{n} \Big( \begin{pmatrix}
        \hat \beta_{TS} \\
        \hat p_{mle}
    \end{pmatrix} 
    -
    \begin{pmatrix}
        \beta_{TS} \\
        p
    \end{pmatrix} \Big)
    \tod \mathcal{N}_2(0, V),
\end{align}
where $V = \text{Var}(\psi_i) = \mathbb E(\psi_i \psi_i^\top)$,
$$
 \psi_i:=  \begin{pmatrix}
        e_3^\top \,  \mathcal{I}_0(\beta)^{-1} \,  \delta_i \big ( \tilde X_i - \bar X(\beta, T_{{\rm obs},i}) \big ) \\
        X_{iS}- p  
    \end{pmatrix}, 
$$
and  $\bar X(\beta, T_{{\rm obs},i})$  is defined in \eqref{riskmean}. Therefore, the asymptotic covariance matrix $V$ in \eqref{res} can be consistently estimated by \eqref{covest}.
\end{proof}

We can derive a similar result for the outcome model-based estimator \eqref{outcome}. For this purpose we make the following assumptions
    \begin{enumerate}[leftmargin=4em, labelindent=0pt]
    \item[(OM1)] The trial and non-trial samples are independent.
    \item[(OM2)] The trial and non-trial sample sizes $n \to \infty$ and $n^* \to \infty$ such that $n/n^* \to \kappa \in (0,\infty)$.
    \item[(OM3)] The conditional probabilities $p(Z_i) := \mathbb P(X_{iS} =1 | Z_i)$ and $p(Z_i^*) := \mathbb P(X_{iS}^* =1 | Z_i^*)$ in \eqref{itexp} are given by a logistic regression model, that is
\begin{align}
    \label{model1}
        p(Z_i) = \text{expit}(\eta^\top Z_i), \quad p(Z_i^*) = \text{expit}(\eta^\top Z_i^*).
\end{align}
    Note that only the $X_{iS}$ are observed while the $X_{iS}^*$ are unobserved.
\end{enumerate}
Note that condition (OM3) can be replaced by other parametric model assumptions for the conditional probability.

\begin{proposition} \label{prob0a}
        { \it If Assumptions (OM1)--(OM3) hold, 
the outcome model-based estimator $\hat p_{OM}$ in \eqref{outcome} satisfies  \eqref{p1b}. Moreover,  
    the asymptotic covariance matrix $V$  can be consistently estimated by
    \begin{align}
    \label{vestim}
    \hat V = \frac{1}{n} \sum_{i=1}^{n} \hat \psi_i^{(1)} \hat \psi_i^{(1)\top} + \frac{n}{n^*} \frac{1}{n^*} \sum_{i=1}^{n^*} \hat \psi_i^{(2)} \hat \psi_i^{(2)\top},
\end{align}
where 
\begin{align*}
    \hat \psi_i^{(1)} &= \begin{pmatrix}
        e_3^\top \, ( \frac{1}{n}  \, \mathcal{I}(\hat \beta)  )^{-1} \, \delta_i\bigl(\tilde X_i-\bar X(\hat \beta,T_{\mathrm{obs},i})\bigr) \\
        \hat \Gamma^\top \hat \phi_i^{\eta}
    \end{pmatrix},~~~~
     \hat \psi_i^{(2)} = \begin{pmatrix}
         0 \\
        \operatorname{expit}(\hat \eta^\top Z_i^*) - \hat p_{OM}
          \end{pmatrix}~, \\
    \hat \phi_i^{\eta} & = \hat I_\eta^{-1} s_i(\hat \eta), \quad \hat I_\eta = \frac{1}{n} \sum_{i=1}^n \operatorname{expit}(\hat \eta^\top Z_i)(1-\operatorname{expit}(\hat \eta^\top Z_i)) Z_i Z_i^\top \\ 
   \hat \Gamma  & = \frac{1}{n^*} \sum_{i=1}^{n^*} \big(\operatorname{expit}(\hat \eta^\top Z_i^*)(1-\operatorname{expit}(\hat \eta^\top Z_i^*) \big) Z_i^* 
\end{align*}
and 
$\mathcal{I}(\hat \beta)$ is defined in \eqref{fisher_est}.
    }
\end{proposition}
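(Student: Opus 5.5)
The plan is to derive a joint asymptotically linear representation of $(\hat\beta_{TS},\hat p_{OM})$ as a sum of two independent parts, one coming from the trial sample and one from the non-trial sample, and then conclude with a multivariate central limit theorem and a consistency argument for the plug-in estimator \eqref{vestim}.

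First, the Cox estimator and the logistic regression estimator $\hat\eta$ are both fitted on the trial data. As in the proof of Proposition \ref{prop0}, the expansion underlying \eqref{p1a} gives
\[
\sqrt n(\hat\beta_{TS}-\beta_{TS})=\frac{1}{\sqrt n}\sum_{i=1}^n e_3^\top\mathcal I_0(\beta)^{-1}\delta_i\bigl(\tilde X_i-\bar X(\beta,T_{\mathrm{obs},i})\bigr)+o_{\mathbb P}(1).
\]
The standard M-estimation expansion for the logistic MLE fitted to $(X_{iS},Z_i)$, which is valid under (OM3), gives
\[
\sqrt n(\hat\eta-\eta)=\frac{1}{\sqrt n}\sum_{i=1}^n I_\eta^{-1}s_i(\eta)+o_{\mathbb P}(1),
\]
where $s_i(\eta)=(X_{iS}-\operatorname{expit}(\eta^\top Z_i))Z_i$.

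Second, decompose
\[
\hat p_{OM}-p=\Bigl[\frac{1}{n^*}\sum_{i=1}^{n^*}\bigl(\operatorname{expit}(\hat\eta^\top Z_i^*)-\operatorname{expit}(\eta^\top Z_i^*)\bigr)\Bigr]+\Bigl[\frac{1}{n^*}\sum_{i=1}^{n^*}\operatorname{expit}(\eta^\top Z_i^*)-p\Bigr].
\]
By \eqref{itexp} applied to the non-trial population, the second bracket is a centred i.i.d. average. For the first bracket, a mean-value expansion together with a uniform law of large numbers on a neighbourhood of $\eta$ gives $\Gamma^\top(\hat\eta-\eta)+o_{\mathbb P}(n^{-1/2})$, where $\Gamma=\mathbb E[\operatorname{expit}'(\eta^\top Z^*)Z^*]$. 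This step requires a moment condition such as $\mathbb E\|Z^*\|^2<\infty$, which I would state explicitly.

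Combining these with (OM2), written as $\sqrt n/\sqrt{n^*}\to\sqrt\kappa$, yields
\[
\sqrt n\Bigl((\hat\beta_{TS},\hat p_{OM})^\top-(\beta_{TS},p)^\top\Bigr)=\frac{1}{\sqrt n}\sum_{i=1}^n\psi_i^{(1)}+\sqrt{\kappa}\,\frac{1}{\sqrt{n^*}}\sum_{i=1}^{n^*}\psi_i^{(2)}+o_{\mathbb P}(1),
\]
with $\psi^{(1)},\psi^{(2)}$ the population versions of $\hat\psi^{(1)},\hat\psi^{(2)}$.

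Third, the two sums are independent by (OM1). Each converges to a centred normal distribution, the first by the joint (martingale) CLT used for \eqref{p1a} and the second by the classical i.i.d. CLT. Taking the convolution gives \eqref{p1b} with
\[
V=\mathbb E\bigl[\psi^{(1)}\psi^{(1)\top}\bigr]+\kappa\,\mathbb E\bigl[\psi^{(2)}\psi^{(2)\top}\bigr].
\]
Consistency of \eqref{vestim} then follows by replacing the unknown quantities by $\hat\beta$, $\hat\eta$, $\frac1n\mathcal I(\hat\beta)$, $\hat I_\eta$, $\hat\Gamma$ and $\hat p_{OM}$, which are all consistent, and applying the law of large numbers with continuity of the summands in the parameters. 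The $n/n^*$ factor in \eqref{vestim} estimates $\kappa$.

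The main obstacle is the joint convergence of the Cox term and the $\eta$-term within the trial sample. The Cox score is a counting-process martingale and is not a sum of i.i.d. terms in the usual sense, because $\bar X$ depends on the whole risk set. I would handle this by replacing $\bar X(\beta,t)$ with its limit $e(\beta,t)=s^{(1)}(\beta,t)/s^{(0)}(\beta,t)$, following the representation of \citeSM{andersen1982}. The resulting terms, including the correction term $-\int(\ldots)\,dM_i$, are i.i.d. across patients, and the multivariate CLT then applies to their stacking with the i.i.d. logistic scores. Note that the $\delta_i(\tilde X_i-\bar X)$ form used in $\hat\psi^{(1)}$ omits the compensator part of the influence function. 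If the exact form of $V$ matters, I would either verify that the paper's version matches or correct it to the full martingale influence function.
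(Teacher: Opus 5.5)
Your proposal follows essentially the same route as the paper's proof: the Cox expansion \eqref{l14}, the Bahadur expansion of $\hat\eta$, the split of $\hat p_{OM}-p$ with a first-order expansion giving $\Gamma^\top(\hat\eta-\eta)+o_{\mathbb P}(n^{-1/2})$, independence of the trial and non-trial sums via (OM1), and $V=V_1+\kappa V_2$ estimated by plug-in. Your treatment of the joint CLT is more careful than the paper's bare appeal to Andersen--Gill, and your compensator worry turns out to be harmless: $\mathbb E[\delta(\tilde X-e(\beta,T_{\mathrm{obs}}))^{\otimes 2}]=\mathcal I_0(\beta)$, and the extra cross term is $e_3^\top\mathcal I_0(\beta)^{-1}\mathbb E[A\,\Gamma^\top\phi^\eta]$, where $A$ is the omitted compensator integral; when $Z$ is conditionally independent of $(T,C)$ given $X$, this term equals the $X_TX_S$-coefficient of $\mathbb E[\Gamma^\top\phi^\eta\mid X]$, which is zero under randomization, so the paper's $\hat V$ is consistent for the martingale-based $V$.
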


\begin{proof} The Bahadur expansion of  the   mle $\hat \eta$ in the logistic regression model is given by 
    \begin{align*}
        \sqrt{n}( \hat \eta - \eta) = \frac{1}{\sqrt{n}} \sum_{i=1}^n \phi_i^{\eta} + o_{\mathbb P}(1),
    \end{align*}
    where $\phi_i^\eta = I_{\eta}^{-1} \, s_i(\eta)$, the matrix $I_\eta = \mathbb E_Z[\text{expit}(\eta^\top Z)(1-\text{expit}(\eta^\top Z)) Z Z^\top]$ is non-singular and $s_i(\eta) = Z_i (X_{iS} - \text{expit}(\eta^\top Z_i) )$ denotes the score of the $i$th patient in the trial.
This gives
\begin{align*}
    \hat p_{OM} - p = \frac{1}{n^*} \sum_{i=1}^{n^*} \text{expit}(\hat \eta^\top Z_i^*) - \text{expit}(\eta^\top Z_i^*)  +  \frac{1}{n^*} \sum_{i=1}^{n^*} \text{expit}(\eta^\top Z_i^*) - p.
\end{align*}
Now, a first-order Taylor expansion around $\eta$ yields for the first sum
\begin{align*}
    &\frac{1}{n^*} \sum_{i=1}^{n^*} \text{expit}(\hat \eta^\top Z_i^*) - \text{expit}(\eta^\top Z_i^*) \\ &= \Big(\frac{1}{n^*} \sum_{i=1}^{n^*} \big(\text{expit}(\eta^\top Z_i^*)(1-\text{expit}(\eta^\top Z_i^*) \big) Z_i^* \Big)^\top (\hat \eta - \eta) + o_{\mathbb P}(n^{-1/2}) \\
    &= \Gamma^\top (\hat \eta - \eta) + o_{\mathbb P}(n^{-1/2}),
\end{align*}
where the last equality follows from the law of large numbers and 
\begin{align*}
 \Gamma := \mathbb E_Z\big(\text{expit}(\eta^\top Z)(1-\text{expit}(\eta^\top Z))Z \big).
\end{align*}
By applying (OM3) we therefore obtain the representation
\begin{align}
\label{l13}
    \sqrt{n}(\hat p_{OM} - p) = \frac{1}{\sqrt{n}} \sum_{i=1}^n \Gamma^\top \phi_i^{\eta} + \sqrt{\frac{n}{n^*}} \frac{1}{\sqrt{n^*}} \sum_{i=1}^{n^*} \text{expit}(\eta^\top Z_i^*) - p + o_{\mathbb P}(1).
\end{align}
From the proof of asymptotic normality for $\hat \beta$ in \citeSM{andersen1982} it follows that 
\begin{align}
\label{l14}
    \sqrt{n}(\hat \beta_{TS}-\beta_{TS}) &= \frac{1}{\sqrt{n}} \sum_{i=1}^n  e_3^\top \mathcal{I}_0(\beta)^{-1} \delta_i\bigl(\tilde X_i-\bar X(\beta,T_{\mathrm{obs},i})\bigr) + o_{\mathbb P}(1).
\end{align}

Combining this representation with the  expansion \eqref{l13}  gives
\begin{align}
\label{l15}
    \sqrt{n} \Big( \begin{pmatrix}
        \hat \beta_{TS} \\
        \hat p_{OM}
    \end{pmatrix} 
    -
    \begin{pmatrix}
        \beta_{TS} \\
        p
    \end{pmatrix} \Big)
    =  &\frac{1}{\sqrt{n}} \sum_{i=1}^n  \psi_i^{(1)}  + \sqrt{\frac{n}{n^*}} \frac{1}{\sqrt{n^*}} \sum_{i=1}^{n^*}  \psi_i^{(2)} + o_{\mathbb P}(1),
\end{align}

where 

\begin{align*}
\psi_i^{(1)} = 
   \begin{pmatrix}
         e_3^\top \mathcal{I}_0(\beta)^{-1} \delta_i\bigl(\tilde X_i-\bar X( \beta,T_{\mathrm{obs},i})\bigr) \\
        \Gamma^\top \phi_i^{\eta}
    \end{pmatrix}  ~,~~
   \psi_i^{(2)}= \begin{pmatrix}
         0 \\
        \text{expit}(\eta^\top Z_i^*) - p
    \end{pmatrix}.
\end{align*}
Following the arguments in \citeSM{andersen1982} we obtain for the first sum in \eqref{l15}
\begin{align*}
    \frac{1}{\sqrt{n}} \sum_{i=1}^n \psi_i^{(1)} \tod \mathcal{N}_2(0, V_1), \quad \text{as} \ \ n \to \infty
\end{align*}
and, since the sequence $(\psi_i^{(2)})_{i \in \mathbb N}$ is iid, mean-zero with finite second moments, applying the central limit theorem together with (OM2) to the second sum in \eqref{l15} yields
\begin{align*}
    \sqrt{\frac{n}{n^*}} \frac{1}{\sqrt{n^*}} \sum_{i=1}^{n^*}  \psi_i^{(2)} \tod \mathcal{N}_2(0, \kappa V_2), \quad \text{as} \ \ n, n^* \to \infty 
\end{align*}
where the asymptotic covariance matrices are given by $V_j = \mathbb E(\psi_i^{(j)} \psi_i^{(j)\top})$, $j = 1,2$. By assumption (OM1) the two sums are independent, which gives 
\begin{align*}
    \sqrt{n} \Big( \begin{pmatrix}
        \hat \beta_{TS} \\
        \hat p_{OM}
    \end{pmatrix} 
    -
    \begin{pmatrix}
        \beta_{TS} \\
        p
    \end{pmatrix} \Big) \tod \mathcal{N}_2(0, V), \quad \text{as} \ \ n, n^* \to \infty,
\end{align*}
where $V = V_1 + \kappa V_2$. The derivation of the estimator $\hat V$ in \eqref{vestim} is now straightforward.
\end{proof}

\bibliographystyleSM{chicago}
\bibliographySM{references_appendix}

\end{document}